\def\ket#1{\mathinner{|{#1}\rangle}}
\def\bra#1{\mathinner{\langle{#1}|}}
\title{Cryptanalysis of Isogeny-Based Quantum Money with Rational Points}
\author{Hyeonhak Kim\inst{1}, Donghoe Heo\inst{1} \and Seokhie Hong\inst{1}\thanks{Corresponding author.}}
\authorrunning{H. Kim et al.}
\institute{School of Cybersecurity, Korea University, Seoul 02841, South Korea}
\begin{document}

\maketitle

\keywords{Quantum Money, Quantum Lightning, Class Group Action and Elliptic Curve}
\begin{abstract}
Quantum money is the cryptographic application of the quantum no-cloning theorem. It has recently been instantiated by Montgomery and Sharif (Asiacrypt '24) from class group actions on elliptic curves. In this work, we propose a concrete cryptanalysis by leveraging the efficiency of evaluating division polynomials with the coordinates of rational points, offering a speedup of $O(\log^4p)$ compared to the brute-force attack. Since our attack still requires exponential time, it remains impractical to forge a quantum banknote. Interestingly, due to the inherent properties of quantum money, our attack method also results in a more efficient verification procedure. Our algorithm leverages the properties of quadratic twists to utilize rational points in verifying the cardinality of the superposition of elliptic curves. We expect this approach to contribute to future research on elliptic-curve-based quantum cryptography.
\end{abstract}

\section{Introduction}
Extensive research has been carried out to harness quantum advantage. Using a quantum computer, it is well known that the discrete logarithm problem on which many modern cryptosystems are based is easily breakable. Most of the research on quantum algorithms focuses on their efficiency in solving a classically intractable problem. Recently, there has been a growing interest in utilizing quantum computers for security purposes, notably through quantum money. Quantum money was first proposed by Wiesner in \cite{wiesner1983conjugate}. The no-cloning theorem prevents adversaries from counterfeiting the money. However, Wiesner's scheme was privately verifiable, which means that only the issuing authority (the mint) can verify the money. In \cite{aaronson2009quantum}, the publicly verifiable quantum money scheme was devised where anyone can verify the money.

The first instantiation of publicly verifiable quantum money was \cite{aaronson2012quantum}, but it was later broken by \cite{conde2019non}. Zhandry subsequently proposed a construction of quantum money/lightning based on the assumption of indistinguishability obfuscation in \cite{zhandry2021quantum}, which was later broken by Roberts \cite{roberts2021security}. More recently, Liu, Montgomery and Zhandry devised a construction of quantum money which is called \textit{walkable invariant money} in \cite{liu2023another}. This walkable invariant money was concretely instantiated in \cite{montgomery2025quantum} using class group actions on elliptic curves.

The instantiation of \cite{montgomery2025quantum} uses the cardinality of elliptic curves as a serial number of a banknote. We propose a new way to forge a quantum banknote from the serial number given, more optimal than brute-force attack. We leverage the fact that computing division polynomials with rational points is more efficient than the point-counting algorithm. For an elliptic curve $E$ defined over $\mathbb{F}_p$, a division polynomial $\psi_\ell(X,Y) \in \mathbb{F}_p[X,Y]$ is a polynomial whose roots are the $\ell$-torsion points $(x,y) \in E[\ell]$. Using the recurrence relation of division polynomials, we can compute $\psi_\ell(x,y)$ in $O(\log{\ell})$ multiplications in $\mathbb{F}_p$, which is significantly faster than the point-counting algorithm when $p$ is a large prime.

\subsection{Contributions}
In this work, we analyze the unforgeability of the quantum money scheme proposed in \cite{montgomery2025quantum} in a concrete manner. We identify two potential approaches to forge a quantum banknote in this scheme.

The first approach involves constructing a uniform superposition of elliptic curves with the given cardinality using the quantum random walk. In order to use the quantum random walk-based method, the index used in generating the quantum state must be discarded, which is known as the Index Erasure Problem. In Section \ref{section_quantum_random_walk_based_forgery_attack}, we show that Kuperberg's algorithm \cite{kuperberg2011another} is infeasible for solve this problem in our setting and demonstrate that the quantum random walk-based attack is significantly inefficient.

The second approach leverages quantum search techniques to sample elliptic curves of the required cardinality. A straightforward yet computationally expensive method would be to employ the point-counting algorithm as a search oracle. Instead, we propose a more efficient alternative that takes advantage of the fact that verifying the order of rational points is significantly faster than counting rational points.

Furthermore, we note the intrinsic connection between forging a quantum banknote and the verification process. By exploiting this relationship, we outline how our proposed optimization can be applied not only to an attack strategy but also to an improved verification method. Our main contributions are summarized as follows:

\begin{itemize}
    \item \textbf{Optimized Forgery Attack} : We propose a novel attack strategy that efficiently searches for a quantum banknote with a given serial number, outperforming direct brute-force approaches. Our attack method $O(\log^4 p)$ times faster than the brute-force attack and requires $12\lceil\log p\rceil^2$ qubits, which is $\log p$ times fewer than the brute-force method.
    \item \textbf{Concrete Security Estimation} : As our approach is significantly simpler than implementing the point-counting algorithm, we can provide a more precise estimation of the quantum resources required to forge quantum money. We offer a detailed analysis of the quantum resources. The result is shown in Table \ref{table_resource_estimation}.
    \item \textbf{Improved Verification} : We demonstrate how our insights can improve the efficiency of the verification of a given serial number. The process of checking a serial number becomes $O(\log^4 p)$ times faster than the original approach.
    \item \textbf{Utilization of Rational Points} : We show that rational points can be used to verify the cardinality based on the properties of the quadratic twists of elliptic curves. We expect that our method inspires future research.
\end{itemize}

\subsection{Organization of The Paper}
In Section \ref{section_preliminaries}, we introduce the background on elliptic curves and the notions of quantum money and quantum lightning. We also briefly present the quantum lightning scheme proposed in \cite{montgomery2025quantum}. In Section \ref{section_quantum_random_walk_based_forgery_attack}, we demonstrate that Kuperberg's algorithm is infeasible in our case and that our attack method provides the more optimized than the quantum random walk-based attack. Section \ref{section_search algorithm to find a quantum money} illustrates our quantum search algorithm and its oracle. In Section \ref{section_lower and upper bound of class numbers}, we show that the lower and upper bound of class numbers that leads to the number of iterations in the quantum search algorithm. In Section \ref{section_implementation of our oracle}, our concrete quantum oracle is described and we analyzed its space and time complexity. In Section \ref{section_faster verification}, we apply our attack method to the verification algorithm. Finally, we conclude our work in Section \ref{section_conclusion}.

\section{Preliminaries}
\label{section_preliminaries}
\subsection{Class Group Actions on Elliptic Curves}
\subsubsection{Elliptic Curves}
\label{subsubsection_elliptic curves}
For a large prime $p$, an elliptic curve $E/\mathbb{F}_p$ is defined as follows in Weierstrass form:

$$
E: y^2 = x^3 + Ax + B
$$
where $A, B \in \mathbb{F}_p$. The isomorphism class of elliptic curves can be represented as its $j$-invariant $j(E) = 1728\frac{4A^3}{4A^3+27B^2}.$ Two elliptic curves $E_1, E_2$ are isomorphic over the algebraically closed field $\bar{\mathbb{F}}_p$ if and only if $j(E_1) = j(E_2)$, which means that $j$-invariant uniquely represents the $\bar{\mathbb{F}}_p$-isomorphism classes of elliptic curves. When we consider $\mathbb{F}_p$-isomorphism classes of elliptic curves, we need additional information to uniquely identify the classes. We introduce the representation used in \cite{montgomery2025quantum}. The $\mathbb{F}_p$-isomorphism classes of elliptic curves can be represented by pairs $(j,b) \in \mathbb{F}_p \times \mathbb{Z}$ where $b \in \{0,1\}$ except in the following cases:
\begin{itemize}
    \item If $j \equiv 1728 \mod p$ and $p \equiv 1 \mod 4$, then $0 \leq b \leq 3$.
    \item If $j \equiv 0 \mod p$ and $p \equiv 1 \mod 3$, then $0 \leq b \leq 5$.
\end{itemize}

For a quadratic non-residue $\alpha_2 \in \mathbb{F}_p$, we can recover the Weierstrass pair $(A, B) \in \mathbb{F}_p$ from the given pair $(j, b)$ as follows:\\
If $j \not \equiv 1728, 0 \mod p$,
$$y^2 = x^3 + \frac{3j\alpha_2^{2b}}{1728-j}x + \frac{2j\alpha_2^{3b}}{1728-j}$$
If $j \equiv 1728 \mod p$, the elliptic curve is given by
$$y^2 = x^3 + \alpha_4^bx$$
where $\alpha_4$ is a quartic non-residue if $p \equiv 1 \mod 4$, a quadratic non-residue otherwise.\\
If $j \equiv 0 \mod p$,
$$y^2 = x^3 + \alpha_6^b$$
where $\alpha_6$ is a sextic non-residue if $p \equiv 1 \mod 3$, a quadratic non-residue otherwise.

By \cite[Cor. X.5.4.1]{Silverman:1338326}, there is one-to-one correspondence between pairs $(j,b)$ and $\mathbb{F}_p$-isomorphism classes of elliptic curves. While the $(j,b)$ representation is used in the quantum money scheme \cite{montgomery2025quantum}, we mainly use the Weierstrass representation $(A,B)$ in this paper. Unlike $j$-invariant form, Weierstrass pairs are directly adaptable to the quantum search algorithm. Throughout Sections \ref{section_search algorithm to find a quantum money} and \ref{section_implementation of our oracle}, we change the $j$-invariant form into the Weierstrass form. Algorithm \ref{algorithm_get weierstrass pair} demonstrates the corresponding quantum algorithm. Algorithm \ref{algorithm_get weierstrass pair} requires just a few multiplications in $\mathbb{F}_p$.

\begin{algorithm}[h]
\caption{Algorithm $\mathsf{GetWeierstrassPair}_{p}$}
\label{algorithm_get weierstrass pair}
\hspace*{\algorithmicindent} \textbf{Input:} A prime $p$ and a quantum state $\ket{j}\ket{b}$ where $(j, b) \in \mathbb{F}_p \times \mathbb{Z}$ and $0\leq b \leq 5$.\\
\hspace*{\algorithmicindent} \textbf{Output:} A quantum state $\ket{A}\ket{B}$.
\begin{algorithmic}[1]
\If{$p \equiv 1 \mod 12$}
\State Set $\alpha_2$, $\alpha_4$ and $\alpha_6$ as quadratic, quartic and sextic non-residue respectively.
\ElsIf{$p \equiv 1 \mod 4$ and $p \not\equiv 1 \mod 3$}
\State Set $\alpha_2$ and $\alpha_6$ as quadratic non-residues and $\alpha_4$ as a quartic non-residue.
\ElsIf{$p \equiv 1 \mod 3$ and $p \not\equiv 1 \mod 4$}
\State Set $\alpha_2$ and $\alpha_4$ as quadratic non-residues and $\alpha_6$ as a sextic non-residue.
\Else
\State Set all $\alpha_2$, $\alpha_4$ and $\alpha_6$ as quadratic non-residues.
\EndIf
\State Compute $\ket{A} \leftarrow \ket{(j\equiv1728)\times\alpha_4^b + (j\not\equiv0,1728)\times\left(\frac{3j\alpha_2^{2b}}{1728-j}\right) \mod p}$.
\State Compute $\ket{B} \leftarrow \ket{(j\equiv0)\times\alpha_6^b + (j\not\equiv0,1728)\times\left(\frac{2j\alpha_2^{3b}}{1728-j}\right) \mod p}$.\\
\Return $\ket{A}\ket{B}$.
\end{algorithmic}
\end{algorithm}

We also note that the group structure of an elliptic curve defined over $\mathbb{F}_p$ is $\mathbb{Z}/m\mathbb{Z} \times \mathbb{Z}/mk\mathbb{Z}$.

\begin{theorem}
\label{theorem_elliptic curve group structure}
    For a prime $p$, the cardinality of an elliptic curve defined over $\mathbb{F}_p$ is isomorphic to
    $$
    \mathbb{Z}/m\mathbb{Z} \times \mathbb{Z}/mk\mathbb{Z}
    $$
    where $m, k \in \mathbb{Z}^+$ and $m | (p-1)$.
\end{theorem}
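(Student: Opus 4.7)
The statement asserts that $E(\mathbb{F}_p)$, as an abelian group, is isomorphic to $\mathbb{Z}/m\mathbb{Z} \times \mathbb{Z}/mk\mathbb{Z}$ for some $m,k \in \mathbb{Z}^+$ with $m \mid p-1$. My plan is to combine the classification of finite abelian groups with structural facts about the $n$-torsion subgroup of an elliptic curve, and then invoke the Weil pairing to extract the divisibility condition $m \mid p-1$.

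First I would observe that $E(\mathbb{F}_p)$ is a finite abelian group, so by the fundamental theorem it decomposes as $\mathbb{Z}/n_1\mathbb{Z} \times \cdots \times \mathbb{Z}/n_r\mathbb{Z}$ with $n_1 \mid n_2 \mid \cdots \mid n_r$. To bound $r \leq 2$, I would use the standard fact (Silverman III.6) that for every integer $n$ with $\gcd(n,p)=1$, the $n$-torsion subgroup satisfies $E[n] \cong \mathbb{Z}/n\mathbb{Z} \times \mathbb{Z}/n\mathbb{Z}$ over $\bar{\mathbb{F}}_p$. If $E(\mathbb{F}_p)$ had three or more invariant factors, then for a small prime $\ell \mid n_1$ coprime to $p$, the $\ell$-torsion of $E(\mathbb{F}_p)$ would contain $(\mathbb{Z}/\ell\mathbb{Z})^3$, contradicting $E[\ell] \cong (\mathbb{Z}/\ell\mathbb{Z})^2$. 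Hence $E(\mathbb{F}_p) \cong \mathbb{Z}/m\mathbb{Z} \times \mathbb{Z}/mk\mathbb{Z}$ with $m = n_1$ and $mk = n_2$.

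The main obstacle is proving $m \mid p-1$; this is where the Weil pairing enters, and it is the only step that is not just abstract group theory. Since $m \mid mk$, the full $m$-torsion subgroup $E[m] \cong (\mathbb{Z}/m\mathbb{Z})^2$ is contained in $E(\mathbb{F}_p)$. The Weil pairing gives a non-degenerate, Galois-equivariant, alternating pairing $e_m \colon E[m] \times E[m] \to \mu_m \subseteq \bar{\mathbb{F}}_p^\times$. Choosing a basis $(P,Q)$ of $E[m]$, the value $\zeta = e_m(P,Q)$ is a primitive $m$-th root of unity. Because $P, Q \in E(\mathbb{F}_p)$ are fixed by the $p$-power Frobenius $\pi$, Galois-equivariance forces $\pi(\zeta) = \zeta^p = e_m(\pi P, \pi Q) = e_m(P,Q) = \zeta$, so $\zeta \in \mathbb{F}_p^\times$. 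Thus $\mu_m \subseteq \mathbb{F}_p^\times$, which yields $m \mid p-1$.

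I would close by noting that if $p \mid n$ one has to be slightly careful (the $p$-torsion is either $0$ or $\mathbb{Z}/p\mathbb{Z}$), but this only affects the $p$-part of $m$; since $\gcd(m,p)=1$ automatically follows from $m \mid p-1$, the argument above uses only torsion at primes coprime to $p$ and the conclusion stands. The genuine technical content is the Weil-pairing step; the rest is a careful bookkeeping of the structure theorem.
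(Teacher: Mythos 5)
The paper supplies no proof of its own; it simply cites R\"uck's note \cite{ruck1987note}. Your argument is the standard self-contained proof of this fact---structure theorem to get at most two invariant factors, then Galois-equivariance of the Weil pairing to force $\mu_m \subseteq \mathbb{F}_p^\times$---and it is essentially correct; R\"uck's paper itself goes further and characterizes exactly which pairs $(m,k)$ arise, but the divisibility $m \mid p-1$ rests on exactly the Weil-pairing observation you use.

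One thing to tighten: your closing paragraph is circular as written. You justify $\gcd(m,p)=1$ by appealing to $m \mid p-1$, but $m \mid p-1$ is the conclusion, and the Weil pairing $e_m$ is only defined (and only gives $E[m]\cong(\mathbb{Z}/m\mathbb{Z})^2$) when $\gcd(m,p)=1$, so that coprimality must be established \emph{before} the pairing step, not deduced from it afterwards. Similarly, your bound on the number of invariant factors quietly assumes there is a prime $\ell \mid n_1$ with $\ell \neq p$, which fails if $n_1$ is a power of $p$. Both gaps close the same way: since $E[p]$ is either trivial or $\mathbb{Z}/p\mathbb{Z}$, the $p$-primary part of $E(\mathbb{F}_p)$ is cyclic, so in the invariant-factor decomposition $n_1 \mid n_2$ the factor $n_1 = m$ is coprime to $p$ (any $p$-power sits in $n_2$ alone), and in particular there cannot be three invariant factors. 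With that observation placed up front, the rest of your argument goes through verbatim.
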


\begin{proof}
    We refer to the proof in \cite{ruck1987note}.
\end{proof}

 We note that the number of points of elliptic curves is bounded by well-known Theorem \ref{theorem_hasse theorem}. We use this property in Section \ref{section_implementation of our oracle} to prove the correctness and the soundness of our algorithm.

\begin{theorem}[Hasse's theorem]
\label{theorem_hasse theorem}
Let $E$ be an elliptic curve defined over a finite field $\mathbb{F}_p$. Then
$$
|\#E(\mathbb{F}_p)-p-1| \leq 2\sqrt{p}
$$
\end{theorem}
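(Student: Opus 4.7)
The plan is to prove Hasse's theorem via the Frobenius endomorphism and the positive-definiteness of the degree form on $\mathrm{End}(E)$. First I would introduce the $p$-power Frobenius endomorphism $\phi: E \to E$, $(x,y) \mapsto (x^p, y^p)$, and observe that $E(\mathbb{F}_p)$ is precisely the fixed-point locus of $\phi$, i.e.\ $\ker(1 - \phi)$. Because Frobenius itself is purely inseparable, the derivative of $1 - \phi$ on invariant differentials equals that of $1$, which is nonzero, so $1 - \phi$ is a separable isogeny. Consequently $\#E(\mathbb{F}_p) = \#\ker(1-\phi) = \deg(1-\phi)$.

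Next I would invoke the standard structural fact that the degree map $\deg : \mathrm{End}(E) \to \mathbb{Z}$ is a positive-definite quadratic form, equivalently that the symmetric pairing $\langle \psi, \chi \rangle := \deg(\psi+\chi) - \deg(\psi) - \deg(\chi)$ is $\mathbb{Z}$-bilinear and $\deg(\psi) \geq 0$ with equality iff $\psi = 0$. Setting $t := 1 + p - \#E(\mathbb{F}_p)$ (the trace of Frobenius), expanding gives
$$\deg(m - n\phi) \;=\; m^2 - tmn + p n^2 \;\geq\; 0 \qquad \text{for all } m,n \in \mathbb{Z}.$$
A binary quadratic form that is non-negative on $\mathbb{Z}^2$ is non-negative on $\mathbb{R}^2$ (by continuity and scaling through rationals), so its discriminant must satisfy $t^2 - 4p \leq 0$. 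Rearranging gives $|\#E(\mathbb{F}_p) - p - 1| = |t| \leq 2\sqrt{p}$, which is the desired inequality.

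The main obstacle, and really the only nontrivial ingredient, is establishing that $\deg$ is a positive-definite quadratic form on $\mathrm{End}(E)$. The cleanest route is through the dual isogeny $\hat{\psi}$, characterized by $\psi \circ \hat{\psi} = [\deg \psi]$; from $\widehat{\psi + \chi} = \hat{\psi} + \hat{\chi}$ one derives the parallelogram identity $\deg(\psi+\chi) + \deg(\psi-\chi) = 2\deg(\psi) + 2\deg(\chi)$, which forces $\deg$ to be quadratic, and positivity is immediate from the definition of degree. I would import this from the standard theory (e.g.\ Silverman, Ch.~III) rather than re-derive it, since the rest of the paper only uses the inequality as a black box for bounding the cardinality of twists in the oracle's correctness analysis.
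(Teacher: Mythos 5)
Your proof is correct and is exactly the standard argument in Silverman (Theorem V.1.1), which is precisely the reference the paper cites in lieu of giving its own proof. The chain of steps you lay out --- $\#E(\mathbb{F}_p) = \deg(1-\phi)$ via separability of $1-\phi$, positive-definiteness of the degree form on $\mathrm{End}(E)$ obtained from the dual-isogeny parallelogram identity, and the discriminant bound $t^2 - 4p \leq 0$ for the resulting nonnegative binary quadratic form $m^2 - tmn + pn^2$ --- matches the textbook proof step for step, so there is no divergence to report.
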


\begin{proof}
    We refer to the proof in \cite[Theorem 1.1]{Silverman:1338326}.
\end{proof}

\subsubsection{Quadratic Twist of Elliptic Curves}
Given an elliptic curve $E : y^2=x^3+Ax+B$ defined over $\mathbb{F}_p$, a quadratic twist of $E$ is defined as $E^t:y^2=x^3+\alpha^{-2}Ax+\alpha^{-3}B$ where $\alpha$ is a quadratic non-residue in $\mathbb{F}_p$. There is an isomorphism $\phi : E \rightarrow E^t$ defined over $\mathbb{F}_{p^2}$ such that $\phi(x,y) = (\alpha^{-1}x,\alpha^{-3/2}y)$. Note that $E$ and $E^t$ are not isomorphic over $\mathbb{F}_p$.

\begin{theorem}
\label{theorem_quadratic twist}
Given an elliptic curve $E : y^2=x^3+Ax+B$ over $\mathbb{F}_p$ and its quadratic twist $E^t$ of $E$, they satisfy the following equation.
$$
\#E(\mathbb{F}_p) + \#E^t(\mathbb{F}_p) = 2p+2.
$$
\end{theorem}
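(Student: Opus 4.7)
The plan is to reduce the twist to the more symmetric curve $\tilde{E}:\alpha y^2 = x^3+Ax+B$, and then count affine points on $E$ and $\tilde{E}$ simultaneously by cases on the value of $f(x)=x^3+Ax+B$ in $\mathbb{F}_p$.

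First I would establish an $\mathbb{F}_p$-isomorphism between $E^t$ and $\tilde{E}$. Sending $(x,y)\mapsto(\alpha^{-1}x,\alpha^{-1}y)$ (or an analogous scaling) transforms $\alpha y^2 = x^3+Ax+B$ into $y^2 = x^3+\alpha^{-2}Ax+\alpha^{-3}B$, which is precisely $E^t$ as defined in the previous subsection. Because this map is defined over $\mathbb{F}_p$ and is a bijection on $\mathbb{F}_p$-points (including the point at infinity), we obtain $\#E^t(\mathbb{F}_p)=\#\tilde{E}(\mathbb{F}_p)$, and the problem reduces to showing $\#E(\mathbb{F}_p)+\#\tilde{E}(\mathbb{F}_p)=2p+2$.

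Next I would count affine points on $E$ and $\tilde{E}$ by partitioning $\mathbb{F}_p$ according to the value of $f(x)=x^3+Ax+B$. For a fixed $x\in\mathbb{F}_p$, there are three cases. If $f(x)=0$, then $y=0$ is the unique solution of both $y^2=f(x)$ and $\alpha y^2=f(x)$, contributing one affine point to each curve. If $f(x)$ is a nonzero quadratic residue, the equation $y^2=f(x)$ has two solutions while $\alpha y^2=f(x)$ has none, since $\alpha$ is a non-residue; this contributes two to $E$ and zero to $\tilde{E}$. If $f(x)$ is a nonzero non-residue, the situation is reversed: $y^2=f(x)$ has no solutions but $\alpha y^2=f(x)$ has two. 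In every case the combined contribution from $x$ is exactly $2$.

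Summing over all $x\in\mathbb{F}_p$ gives $2p$ combined affine points, and adding the two points at infinity (one from each curve) yields $\#E(\mathbb{F}_p)+\#E^t(\mathbb{F}_p)=2p+2$, as claimed. I do not anticipate any serious obstacle: the only delicate point is being careful that the scaling which identifies $E^t$ with $\tilde{E}$ really is defined over $\mathbb{F}_p$ (so that the $\mathbb{F}_p$-point counts match), after which the case analysis is purely elementary.
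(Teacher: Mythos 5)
Your proof is correct and follows essentially the same approach as the paper: partition $x\in\mathbb{F}_p$ by whether $f(x)=x^3+Ax+B$ is zero, a nonzero square, or a nonsquare, observe that each $x$ contributes exactly two affine points to $E$ and $E^t$ combined, and add the two points at infinity. Your use of the intermediate model $\tilde{E}:\alpha y^2=f(x)$ makes the scaling isomorphism to $E^t$ explicit (a point the paper glosses over slightly when it writes $(x,0)\in E^t(\mathbb{F}_p)$ rather than $(\alpha^{-1}x,0)$), but the counting argument is the same.
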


\begin{proof}
    For an $x \in \mathbb{F}_p$, if $x^3+Ax+B$ is a quadratic residue over $\mathbb{F}_p$, there exist $\pm y$ such that $y^2 = x^3+Ax+B$, which means $(x,\pm y) \in E(\mathbb{F}_p)$. Otherwise, there exist $\pm y$ such that $\alpha y^2=x^3+Ax+B$ for a quadratic non-residue $\alpha$, which means $(\alpha^{-1}x,\pm\alpha^{-1}y) \in E^t(\mathbb{F}_p)$. When $y=0$, $(x,0) \in E(\mathbb{F}_p)$ and $(x,0) \in E^t(\mathbb{F}_p)$. It says that the sum of the sizes of two groups $E(\mathbb{F}_p)$ and $E^t(\mathbb{F}_p)$ equals to $2(|\mathbb{F}_p| + 1) = 2p+2$ considering the point at infinity $0_E$ and $0_{E^t}$. 
    \qed
\end{proof}

\subsubsection{Class Group Actions}
For an elliptic curve $E$, we say that $E$ has complex multiplication by $\mathcal{O}$ if $\text{End}(E) \cong \mathcal{O}$ where $\mathcal{O}$ is an order of an imaginary quadratic number field $K$.

According to Deuring correspondence, an element $\alpha \in \mathcal{O}$ corresponds to an endomorphism $\theta_\alpha \in \text{End}(E)$ and an integral ideal $I \subset \mathcal{O}$ corresponds to an isogeny $\phi_I : E \rightarrow E_I := E/E[I]$ where $E[I] =\{\cap \ker(\theta_\alpha): \alpha \in I\}$.

\begin{theorem}
\label{theorem_time complexity of class group action}
Let $E$ be an elliptic curve over $\mathbb{F}_p$ with complex multiplication $\mathcal{O}$, and let $\mathfrak{l} \subset \mathcal{O}$ be a prime ideal of norm $\ell$, where $\ell$ is a rational prime. Then there is a classical algorithm which computes the isogeny $\varphi_\mathfrak{l}$ in time complexity $O(\ell M(p)\log \ell\log\log\ell\log p)$ where $M(p)$ is the complexity of a multiplication in $\mathbb{F}_p$.
\end{theorem}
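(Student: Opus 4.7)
The plan is to follow the standard pipeline for evaluating a prime-degree class group action, proceeding in two stages: locating a generator of the kernel $E[\mathfrak{l}]$, then applying Vélu's formulas to obtain explicit defining equations for $\varphi_\mathfrak{l}$ from that generator.

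First, I would recall that $E[\mathfrak{l}]$ is a cyclic subgroup of order $\ell$ sitting inside $E[\ell]$, cut out as an eigenspace of the Frobenius endomorphism. When $\mathfrak{l}$ splits in $\mathcal{O}$, this eigenspace is defined over $\mathbb{F}_p$, so one can sample random points $Q \in E(\mathbb{F}_p)$ and compute $\bigl[\#E(\mathbb{F}_p)/\ell\bigr]Q$ until a point of exact order $\ell$ is produced. Each trial requires $O(\log p)$ group operations and hence $O(M(p)\log p)$ bit operations, while a constant expected number of trials is enough. When $\mathfrak{l}$ instead acts nontrivially on the twist, the same argument applies to $E^t$ by Theorem \ref{theorem_quadratic twist}, and if $\mathfrak{l}$ is inert one enlarges to a small extension; the dominating bound is unchanged.

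Second, given a generator $P$, I would invoke Vélu's formulas for the $\ell$-isogeny with kernel $\langle P \rangle$. This entails computing the scalar multiples $[i]P$ for $1 \leq i \leq (\ell-1)/2$, assembling the kernel polynomial $h(X) = \prod_i \bigl(X - x([i]P)\bigr)$, and reading off the codomain curve $E_\mathfrak{l}$ together with the rational maps of $\varphi_\mathfrak{l}$. The $\ell$ scalar multiples contribute the leading factor $O(\ell M(p))$; the remaining $\log\ell\,\log\log\ell$ factor comes from fast (FFT-based) polynomial arithmetic used to build $h$ from its roots via a product tree and to derive the isogeny's defining rational functions from $h$.

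The main obstacle I anticipate is not the algorithmic skeleton, which is essentially the classical Couveignes--Rostovtsev--Stolbunov procedure, but rather the clean bookkeeping of the three multiplicative factors $\ell$, $\log\ell\log\log\ell$, and $\log p$: in particular, verifying that the kernel-finding stage is not worse than $O(M(p)\log p)$ in expectation, and that the polynomial assembly does not produce hidden cross terms with the Vélu evaluation. Rather than reprove these estimates from scratch, I would defer to the detailed complexity analysis in the established literature on CRS-style class group action evaluation.
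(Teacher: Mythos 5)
The paper does not prove this theorem; it defers entirely to \cite{de2018towards} and \cite{montgomery2025quantum}, and the algorithm those references supply is the modular-polynomial (SEA/Elkies) route, not the V\'elu route you describe. The multiplicative structure of the bound $O(\ell\, M(p)\log\ell\log\log\ell\log p)$ is not a bookkeeping artifact: it is the cost of computing $x^p \bmod \Phi_\ell(x,j_0)$ by $O(\log p)$ modular squarings, each a multiplication of degree-$\ell$ polynomials over $\mathbb{F}_p$ at cost $O(\ell\log\ell\log\log\ell\, M(p))$. The V\'elu pipeline, when it applies, has the \emph{additive} profile $O\bigl(M(p)(\log p + \ell\log\ell\log\log\ell)\bigr)$; no amount of bookkeeping turns that sum into the stated product, so the ``obstacle'' you anticipate reflects a mismatch between your algorithm and the claimed bound rather than a detail to tidy.

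More substantively, your kernel-finding step contains a genuine gap. The assertion that ``when $\mathfrak{l}$ splits in $\mathcal{O}$, this eigenspace is defined over $\mathbb{F}_p$'' is false in general: $E[\mathfrak{l}]$ is defined over $\mathbb{F}_{p^d}$, where $d$ is the multiplicative order modulo $\ell$ of the Frobenius eigenvalue associated to $\mathfrak{l}$, and $d$ can be as large as $\ell-1$. Passing to the quadratic twist only handles $d=2$, and the ``inert'' escape clause is vacuous --- a prime ideal of prime norm $\ell$ in an imaginary quadratic order is split or ramified, never inert. When $d$ is of order $\ell$, cofactor multiplication and V\'elu evaluation over $\mathbb{F}_{p^d}$ already cost on the order of $\ell^2\log\ell\log\log\ell\log p\,M(p)$ bit operations, overshooting the stated bound by roughly a factor of $\ell$. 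This is precisely why \cite{de2018towards} falls back on modular polynomials for Elkies primes with non-rational torsion, and why this paper must do so as well: as noted in Section~\ref{section_quantum_random_walk_based_forgery_attack}, V\'elu cannot be evaluated on a superposition of curves, because one cannot sample a kernel point in superposition, so the SEA route is not a convenience but a necessity.
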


We refer to \cite{de2018towards} and \cite{montgomery2025quantum} for the proof and the concrete implementation as quantum algorithm.

\subsection{Quantum Money and Quantum Lightning}
In this section, we define public key quantum money and quantum lightning. Both quantum money and quantum lightning consist of two functions \textsf{Gen} and \textsf{Ver} as follows:

\begin{itemize}
    \item $\textsf{Gen}(1^\lambda)$. Takes as input a security paramter $\lambda$ and generate a quantum money state $\ket{\psi}$ and the associated serial number $\sigma$.
    \item $\textsf{Ver}(\sigma, \ket{\psi})$. Takes as input a pair of a serial number $\sigma$ and a supposed quantum money $\ket{\psi}$, verify them.
\end{itemize}

\begin{definition}[Quantum Money Unforgeability]
$(\textsf{Gen}, \textsf{Ver})$ is secure quantum money if, for all quantum polynomial-time adversary $A$, it is negligible probability that $A$ wins the following game :
\begin{itemize}
    \item The challenger runs $(\sigma, \ket{\psi}) \leftarrow \textsf{Gen}(1^\lambda)$ and give $\sigma, \ket{\psi}$ to $A$.
    \item $A$ produces and sends to the challenger two supposed quantum money $\ket{\psi_1}$ and $\ket{\psi_2}$.
    \item The challenger runs $b_1 \leftarrow \textsf{Ver}(\sigma, \ket{\psi_1})$ and $b_2 \leftarrow \textsf{Ver}(\sigma, \ket{\psi_2})$. If $b_1 = b_2 = 1$, $A$ wins.
\end{itemize}
\end{definition}

\begin{definition}[Quantum Lightning Unforgeability]
$(\textsf{Gen}, \textsf{Ver})$ is secure quantum lightning if, for all quantum polynomial-time adversary $A$, it is negligible probability that $A$ wins the following game :
\begin{itemize}
    \item $A$, on input $1^\lambda$, produces and sends to the challenger a serial number $\sigma$ and supposed quantum money $\ket{\psi_1}$ and $\ket{\psi_2}$.
    \item The challenger runs $b_1 \leftarrow \textsf{Ver}(\sigma, \ket{\psi_1})$ and $b_2 \leftarrow \textsf{Ver}(\sigma, \ket{\psi_2})$. If $b_1 = b_2 = 1$, $A$ wins.
\end{itemize}
\end{definition}

\subsection{Quantum Money from Class Group Actions}
\label{subsection_quantum money from class group actions}
In this section, we briefly introduce the construction of quantum lightning using class group actions on elliptic curves, which is proposed in \cite{montgomery2025quantum}. $\mathsf{Gen}$ uses $\mathsf{ECSupGen}$ as a subroutine which makes a uniform superposition of elliptic curves over $\mathbb{F}_p$. $\mathsf{Ver}$ uses $\mathsf{ECSupVer}$ as a subroutine which verify the given serial number and uniformity of the superposition of the supposed quantum money. The isogeny computation uses SEA isogeny algorithm in Theorem \ref{theorem_time complexity of class group action}.

\begin{algorithm}[H]
\caption{Algorithm \textsf{ECSupGen}}
\label{algorithm_ECSupGen}

\hspace*{\algorithmicindent} \textbf{Input:} $p$ a prime\\
\hspace*{\algorithmicindent} \textbf{Output:} $\ket{E}$ a quantum state
\begin{algorithmic}
\State Let $\mathcal{S}$ be a register that can store a pair $(j,b)$, where $j \in \mathbb{F}_p$ and $0 \leq b \leq 5$.
\State Generate a uniform superposition $\ket{\psi} \in \mathcal{S}$ over all pairs $(j,b)$, where
\begin{itemize}
    \item If $j\not \equiv 0, 1728 \mod p$, then $b = 0$ or 1.
    \item If $j\equiv 1728$ and $p \equiv 1 \mod 4$, then $0 \leq b \leq 3$. If $p \equiv 3 \mod 4$, then $b = 0$ or 1.
    \item If $j\equiv 0$ and $p \equiv 1 \mod 3$, then $0 \leq b \leq 5$. If $p \equiv 2 \mod 3$, then $b = 0$ or 1.
\end{itemize}
\end{algorithmic}
\end{algorithm}
In this scheme, a quantum banknote $\ket{\psi}$ is a uniform superposition of the set of elliptic curves such that all curves have the same cardinality $\sigma$. The cardinality $\sigma$ is used as the serial number of a quantum banknote. In Algorithm \ref{algorithm_mint}, it counts the number of rational points of each elliptic curves in the quantum state. The point-counting algorithm is Schoof's algorithm in \cite{schoof1995counting}.
\begin{algorithm}[H]
\caption{Algorithm \textsf{Gen}}
\label{algorithm_mint}
\hspace*{\algorithmicindent} \textbf{Input:} $p$ a prime\\
\hspace*{\algorithmicindent} \textbf{Output:} $\ket{\psi}$ a quantum state and an associated serial number $\sigma \in \mathbb{Z}$.
\begin{algorithmic}[1]
\State Compute a superposition $\sum\ket{j,b}$ over all elliptic curves over $\mathbb{F}_p$ using Algorithm \ref{algorithm_ECSupGen}.
\State Use Schoof's point-counting algorithm to compute the cardinality in $\sum\ket{j,b}$ in superposition and obtain $\sum\ket{j,b}\ket{\#E_{j,b}}$
\State Measure the last register and obtain $\sigma$. Compute $\Delta_{\text{Fr}}(E)=4p-(\sigma-p-1)^2$ and set a third register to be 1 if $\Delta_{\text{Fr}}(E)$ is square-free and $\Delta_{\text{Fr}}(E) > 3p$, and 0 otherwise. Measure the third register; if the result is 0, start over at step 1.\\
\Return The quantum state $\sum_{\#E_{j,b}=\sigma}\ket{j,b}$ and the associated serial number $\sigma$.
\end{algorithmic}
\end{algorithm}

In Algorithm \ref{algorithm_ECSupVer}, it also uses Schoof's point-counting algorithm to compute the associated serial number. The uniformity of the quantum money is verified by applying all the possible isogenies and checking if the quantum state is still the same. In order to compute an isogeny with the superposition of elliptic curves, we need to use SEA isogeny algorithm, specifically Elkies steps (Algorithm 3 and 4) in \cite[p. 12]{de2018towards}. 

\begin{algorithm}[H]
\caption{Algorithm \textsf{ECSupVer}}
\label{algorithm_ECSupVer}
\hspace*{\algorithmicindent} \textbf{Input:} a prime $p$, integers $N$ and $\tau$, and a quantum state $\ket{\psi}$ stored in a register $\mathcal{S}$\\
\hspace*{\algorithmicindent} \textbf{Output:} a bit 0 or 1, then \textsf{ECSupVer} alters $\ket{\psi}$ to a state $\ket{\psi'}$ which it then outputs.
\begin{algorithmic}[1]
\State Check that $\ket{\psi}$ is properly formatted. If not output 0
\State Use Schoof's algorithm to compute the cardinality of the elliptic curve in a new register.
\State Measure the value in the new register. If it is not $N$, output 0. Otherwise, compute the list of group actions $B_K$ and discard the new register.
\State Let $r = \#B_K$. Using a new register, create the state $\ket{\varphi}:=\ket{\textbf{1}_{2r}}\otimes\ket{\psi}$.
\State Repeat the following $\tau$ times:
\begin{enumerate}
    \item Apply the unitary $U$ to $\ket{\varphi}$.
    \item Apply the projection-valued measurement corresponding to $\ket{\textbf{1}_{2r}}\bra{\textbf{1}_{2r}}$ to the resulting state. If the measurement fails output 0
\end{enumerate}\\
\Return 0
\end{algorithmic}
\end{algorithm}

\begin{itemize}
    \item $\ket{\textbf{1}_n} := \frac{1}{\sqrt{n}}\sum_{i=0}^{n}{\ket{i}}$.
    \item $U := \sum_{i=1}^r{\ket{i}\bra{i}\otimes \sigma_i} + \sum_{i=r+1}^{2r}{\ket{i}\bra{i}\otimes I_k}$ where $r = \#B_K$, $k = \#\mathcal{I}_N$, $\sigma_i$ is a group action in $B_K$ and $\mathcal{I}_N$ is the set of elliptic curves with $N$ points.
    \item The group action $\sigma_i$ can be efficiently evaluated by Theorem \ref{theorem_time complexity of class group action}.
    \item $B_K$ is \textit{Bach Generating Set} generated by \cite[Algorithm 4.1]{montgomery2025quantum}, which is the set of ideal classes of unramified primes $\mathfrak{l}$ of a ring of integers $\mathcal{O}_K$ of an imaginary quadratic field $K$ with $N(\mathfrak{l}) < 6(\log \text{Disc}(K))^2$.
\end{itemize}
By \cite[Proposition 8.3]{montgomery2025quantum}, Algorithm \ref{algorithm_Ver} runs in 
$$
\max(O(\log^8p), O(\tau(\log^5p)(\log\log^2p)(\log\log\log^2p)))
$$
where $\tau = 33r^3\lambda$ for $r = \#B_K$. The second term corresponds to the running time of the SEA isogeny algorithm, which dominates or is at least comparable to the complexity of the point-counting algorithm when $\lambda = O(\log p)$ and $\#B_K = O(\log p)$.

\begin{algorithm}[H]
\caption{Algorithm \textsf{Ver}}
\label{algorithm_Ver}
\hspace*{\algorithmicindent} \textbf{Input:} a quantum state $\ket{\psi}$ and a serial number $\sigma \in \mathbb{F}_p$.\\
\hspace*{\algorithmicindent} \textbf{Output:} {0, $\bot$} or {1, $\ket{\psi'}$}.
\begin{algorithmic}[1]
\State Run Algorithm \ref{algorithm_ECSupVer} and receive an output tuple $(\ket{\psi'}, b)$.
\State If $b=0$ then return $0$ and $\bot$ and discard $\ket{\psi'}$.\\
\Return 1 and $\ket{\psi'}$.
\end{algorithmic}
\end{algorithm}
\section{Quantum Random Walk-Based Forgery Attack}
\label{section_quantum_random_walk_based_forgery_attack}
In this section, we introduce an attack method based on the quantum random walk. We demonstrate the time complexity of this attack and emphasize that this method is much slower than the brute-force attack using the point-counting algorithm.
\subsection{Difficulty of Applying Kuperberg's Algorithm}
\label{section_difficulty of applying kuperberg algorithm}
Given a serial number $\sigma$, there is no known general polynomial-time algorithm to construct an elliptic curve with cardinality exactly $\sigma$. However, suppose that one manages to obtain such a curve $E_\sigma$, for example, by collapsing the given quantum money state. From $E_\sigma$ as the starting curve, one can simulate a quantum random walk over the isogeny graph, resulting in an (almost) uniform superposition of the form $\sum_{\mathfrak{a}}\ket{\mathfrak{a}}\ket{[\mathfrak{a}]E_\sigma}$. To pass the verification process, we must discard the first register $\ket{\mathfrak{a}}$ and retain only the second. This task is known as the Index Erasure Problem.

It is well known that the discrete logarithm problem in group actions can be reduced to the HSP (Hidden Shift Problem) and it can be solved in sub-exponential time by using Kuperberg's algorithm \cite{kuperberg2011another}. Here we note that Kuperberg's algorithm is infeasible to erase the index of a superposition state, which underpins the exponential security of the quantum money scheme in \cite{montgomery2025quantum}.

In Kuperberg's algorithm, a weak Fourier measurement is applied to a quantum query of the hidden function $f$, which yields a qubit (known as a phase vector) whose phases encode information about the hidden shift $s$:
$$\sum_{0\leq j<\ell}\mathrm{exp}(2\pi i b_j s/2^n)\ket{j}$$
This procedure assumes the ability to measure and isolate a specific instance of the HSP. However, we are dealing with a superposition over multiple HSP instances and Kuperberg's algorithm becomes infeasible to extract a useful set of phase vectors. As a result, the most viable strategy is to apply a quantum search algorithm to recover the index $\mathfrak{a}$, which still requires exponential time.

\subsection{Time Complexity of Solving the Index Erasure Problem}

By using the quantum search algorithm, we aim to find a group action $\mathfrak{a}$ such that $[\mathfrak{a}]E_\sigma=E$. The size of the search space is determined by the class number $h(d)$ of an imaginary quadratic number field. As we illustrate in Section \ref{section_lower and upper bound of class numbers}, the lower bound of $h(d)$ is $\Theta(\frac{\sqrt{p}}{\log p})$.

Assuming that the search oracle requires time $T_1$, the total time complexity of the quantum search algorithm is $T_1\times\sqrt{h(d)}$ and the lower bound is $\Theta(p^{1/4}{\log p}^{-1/2}T_1)$. On the other hand, if we use our method to create the desired quantum state, the size of the search space becomes $\frac{2p}{h(d)}$, resulting in a total time complexity of $T_2\times\sqrt{2p / h(d)}$, with the upper bound of $\Theta(p^{1/4}{\log p}^{1/2}T_2)$, where $T_2$ is the time complexity of our custom oracle. While the quantum random walk method offers at most a $\log p$ improvement in the number of search iterations compared to our method (note that the actual factor is smaller on average), the time complexity $T_1$ is significantly greater than $T_2$, more than offsetting this advantage.

In \cite[p.376]{bach1990explicit}, Bach showed that the set $B_K$ of ideal classes of unramified primes $\mathfrak{l}$ of an imaginary quadratic field $K$, with $N(\mathfrak{l}) < 6\log^2 d$, generates the class group $\mathrm{Cl}(\mathcal{O}_K)$. Consequently, the size of $B_K$ is approximately $\log(6\log^2 d)$. Since each ideal class appears as an index in the Index Erasure Problem, we need to compute all ideal classes in $\mathrm{Cl}(\mathcal{O}_K)$ using $B_K$. The number of isogeny computations using a prime ideal $\mathfrak{l} \in B_K$ is bounded by $h(d)^{1/\#B_K}$, which evaluates to $\Theta(p^{\frac{1}{4\log (\log p)}})$ when $\log d \approx \log p$. This result makes the overall time complexity of this attack significantly large.

One can address this problem by using a larger factor base $B_K$ to reduce the number of isogeny computations. For example, if we enlarge $B_K$ to have size $\log p$, the maximum norm of prime ideals in $B_K$ would be approximately $p$ and the number of isogeny computations of each prime ideal can be reduced to $O(1)$. However, the cost of each isogeny computation becomes large.

Since we are dealing with a superposition of elliptic curves, we cannot efficiently compute group actions using Vélu's formula. In \cite{montgomery2025quantum}, the authors employ the SEA isogeny algorithm instead. According to \cite[p. 12]{de2018towards}, the bottleneck in the SEA algorithm is computing the $\mathbb{F}_p$-rational roots of the modular polynomial $\Phi_\ell(x, j_0)$ which is a degree $\ell+1$ polynomial whose roots are the $j$-invariants of elliptic curves $\ell$-isogenous to the curve with $j$-invariant $j_0$. To compute the roots, we compute $\gcd(\Phi_\ell, x^p-x)$ which incurs a time complexity of $O(\ell^2\log p)$ multiplications in $\mathbb{F}_p$. As an $\ell$-isogeny computation requires $O(\ell^2\log p) = O(p^2\log p)$, this approach also incurs substantial computational cost.

As the quantum random walk-based attack is significantly inefficient, we focus on optimizing the quantum search algorithm which uses the point-counting algorithm.
\section{General Search Algorithm To Forge Quantum Money}
\label{section_search algorithm to find a quantum money}

In order to find a quantum banknote associated to the given serial number $\sigma$, we use Grover search algorithm. Grover search algorithm consists of the oracle $O_{p, \sigma}$, $n$-bit Walsh-Hadamard gate $W_n$ and the phase rotation gate $O_0 = -2\ket{0^{n+3}}\bra{0^{n+3}}+I$. The search algorithm requires $O(1/\sqrt{r})$ queries to $O_{p,\sigma}$ to obtain a uniform superposition of the target set $T$ when the ratio of the target among the given set $X$ is $r = |T|/|X|$. Figure \ref{fig:grover search algorithm} represents the general Grover search algorithm.

\begin{figure}[h]
    \centering
    \includegraphics[width=0.8\linewidth]{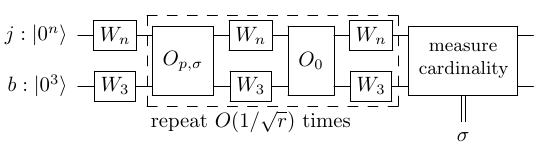}
    \caption{Grover search algorithm to forge a quantum money}
    \label{fig:grover search algorithm}
\end{figure}
The general oracle $O_{p, \sigma}$ runs as follows:
\begin{itemize}
    \item The oracle is classically initialized by a positive integer $\sigma$ satisfying $0<|\sigma-p-1|\leq2\sqrt{p}$ which represents the cardinality of the elliptic curves to be sampled.
    \item It takes as input a pair of elliptic curve coefficient $(j,b) \in \mathbb{F}_p \times \mathbb{Z}$ and the size of the pair of the quantum registers is $\lceil \log p\rceil + 3$.
    \item Oracle flips the phase of the quantum state if the Weierstrass curve $E_{j,b}$ has cardinality $\sigma$.  
\end{itemize}

The direct approach to get the desired quantum banknote is to search with the minting algorithm illustrated in Algorithm \ref{algorithm_mint}. Since the minting algorithm uses Schoof's point-counting algorithm, search oracle $O_{p,\sigma}$ is to count the number of rational points and check if it is the same with the given $\sigma$. Because the point-counting algorithm requires to compute arithmetics on a polynomial ring of a large degree over a finite field, it requires $O(\log^8 p)$ bit operations in total as mentioned in \cite{schoof1995counting}. In Section \ref{section_implementation of our oracle}, we show that we can construct a more optimal oracle than using the point-counting algorithm. 

To predict the overall time complexity of the search algorithm, we need to compute the ratio of solutions that pass through the oracle $O_{p, \sigma}$. The more accurately we compute the ratio of the solutions, the more efficiently the search algorithm can be executed. In our case, the size of the target set $T$ equals to the class number of a imaginary quadratic field $K$ such that the endomorphism ring of the elliptic curves are isomorphic to the maximal order $\mathcal{O}_K$ of $K$ since the quantum state generated by Algorithm \ref{algorithm_mint} satisfies that $\Delta_{\text{Fr}}(E)$ is square-free and $\mathbb{Z}[\text{Fr}]=\text{End}(E) \cong \mathcal{O}_K$.

Computing the class number of a given number field requires sub-exponential time in a classical setting \cite[Section 5.4]{cohen2013course}. However, a polynomial-time quantum algorithm has been developed to compute the class number \cite{biasse2016efficient}. This result is based on a quantum reduction from the class group problem (CGP) to the continuous hidden subgroup problem (CHSP), which can be efficiently solved in a quantum setting. The cost of solving CHSP has been tightly estimated in \cite{de2020quantum}.

Although we do not delve into the quantum algorithm for CHSP in this paper, the class number is crucial to determine the number of search iterations. We provide the lower and upper bound of class numbers, which leads to the bound of the number of search iterations.
\section{Lower and Upper Bound of Class Numbers}
\label{section_lower and upper bound of class numbers}
For the elliptic curves generated by Algorithm \ref{algorithm_mint}, the number of $\mathbb{F}_p$-isomorphism classes of elliptic curves of cardinality $\sigma$ is equal to the class number $h(d)$, where $d$ is the discriminant of an imaginary quadratic order $\mathcal{O}_K$ isomorphic to $\text{End}(E)$. In \cite{montgomery2025quantum}, the lower bound of the class number was provided. The lower bound is calculated from the Dirichlet class number formula and the bound of the Dirichlet series $L(1, \chi)$.

By Dirichlet class number formula, given an integer $d < -4$, the class number $h(d)$ of the imaginary quadratic field of discriminant $d$ satisfies the following equation.

$$
h(d) = \frac{\sqrt{|d|}}{\pi}L(1,\chi)
$$
where $L(1,\chi)$ is Dirichlet $L$-function and $\chi(m) = \left(\frac{d}{m}\right)$ is Kronecker symbol. 

\begin{theorem}
    For a negative integer $d$, let $0 < \epsilon < \frac{1}{2}$, $|d| \geq \max(e^{1/\epsilon}, e^{11.2})$ and $\chi(m) = \left(\frac{d}{m}\right)$. Then
    $$
        L(1, \chi) > 0.655\frac{\epsilon}{|d|^\epsilon}
    $$
\end{theorem}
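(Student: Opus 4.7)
The stated inequality is Tatuzawa's 1951 effective refinement of Siegel's theorem applied to the real primitive Kronecker character $\chi = \left(\tfrac{d}{\cdot}\right)$, so my plan is to invoke that classical result rather than rebuild the analytic machinery from scratch. The first step is the verification that the hypotheses match: one checks that $\chi$ is a real primitive character of conductor $|d|$ (recalling the dichotomy between fundamental and non-fundamental discriminants), and that the numerical constants $0.655$, $e^{11.2}$ and the cutoff $|d| \geq e^{1/\epsilon}$ coincide with those in Tatuzawa's original formulation.

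For expository completeness I would sketch the underlying idea behind Tatuzawa's bound. The key auxiliary object is the Dirichlet series $F(s) = \zeta(s) L(s,\chi) L(s,\chi_1) L(s,\chi\chi_1)$, where $\chi_1$ is a second real primitive character. Because $F(s)$ is the Dedekind zeta function of the biquadratic extension cut out by $\chi$ and $\chi_1$, its Dirichlet coefficients are nonnegative. A Landau-type positivity argument at a point $s = 1 - \epsilon$ together with the residue at the simple pole at $s = 1$ yields a two-variable inequality of the shape $L(1,\chi) L(1,\chi_1) \gg \epsilon / (|d|\,|d_1|)^\epsilon$; choosing $\chi_1$ with $L(1,\chi_1)$ provably bounded below (via a pigeonhole over small discriminants) converts this into the one-sided bound on $L(1,\chi)$ stated in the theorem.

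The main obstacle is the possible Siegel zero of $L(s,\chi)$: if $L(s,\chi)$ had a real zero extremely close to $s = 1$, then $L(1,\chi)$ could be forced below the claimed threshold and the entire argument would collapse. Tatuzawa's contribution, and the hard part of the proof, is to show that among all real primitive characters at most one exceptional discriminant $d_0$ can produce such a zero, and to make the two thresholds $|d| \geq e^{1/\epsilon}$ and $|d| \geq e^{11.2}$ quantitatively sufficient to exclude that exception for every other $d$. Since for the cryptanalytic application we only need the inequality as a technical estimate feeding into the counting of $\mathbb{F}_p$-isomorphism classes, it suffices in the present paper to cite Tatuzawa's theorem directly for the conclusion.
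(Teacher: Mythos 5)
Your proposal matches the paper's approach exactly: the paper likewise proves this statement by a bare citation of Tatuzawa's theorem (Theorem 2 in the referenced work), so invoking that classical result is precisely what is done. The expository sketch you add of the Siegel--Tatuzawa mechanism is a welcome supplement and correctly identifies the role of the potential exceptional modulus; note only that Tatuzawa's original statement allows for at most one exceptional discriminant, a caveat that the paper's theorem statement also elides, so you and the paper inherit the same small imprecision from each other.
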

\begin{proof}
    We refer to the proof in \cite[Theorem 2]{tatuzawa1952theorem}.
\end{proof}

For a large prime $p$, when $\epsilon = 1/\ln{p}$ and $|d| > \max(p, e^{11.2}) = p$, this theorem leads directly to the lower bound of the class number.

$$
h(d) = \frac{\sqrt{|d|}}{\pi}L(1,\chi) > 0.11\frac{\sqrt{p}}{\log p}.
$$
Using research on number fields, we can also derive the upper bound of class numbers. Next, we show that the upper bound of class numbers can be approximated based on Pólya-Vinogradov inequality.

\begin{theorem}[Pólya-Vinogradov Inequality]
    Let $d$ be a positive integer and $\chi(k)$ is a Dirichlet character modulus $d$. Then 
    $$
        \forall m, n \in \mathbb{N}, \sum_{k=n}^{m}\chi(k) = O(\sqrt{d}\log{d}).
    $$
\end{theorem}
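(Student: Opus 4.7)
The plan is to use the finite Fourier expansion of a Dirichlet character via its Gauss sum, which converts the character sum into a short combination of geometric progressions that can be bounded uniformly. Since the principal character makes the sum trivially at most $d$ (already $O(\sqrt{d}\log d)$), and since any imprimitive $\chi$ is induced from a primitive character $\chi^*$ mod $d^*\mid d$ (so applying the primitive case to $\chi^*$ loses at most a $\sigma_0(d/d^*)$ factor absorbed in the big-$O$), I would assume at the outset that $\chi$ is a non-principal primitive character mod $d$.

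The first key step is to invoke the standard Gauss-sum identity
$$\chi(k)\,\tau(\bar\chi) = \sum_{a=1}^{d}\bar\chi(a)\, e^{2\pi i ak/d},\qquad |\tau(\bar\chi)|=\sqrt{d},$$
valid for primitive $\chi$. Substituting it into the character sum and interchanging the order of summation yields
$$\sum_{k=n}^{m}\chi(k) = \frac{1}{\tau(\bar\chi)}\sum_{a=1}^{d-1}\bar\chi(a)\sum_{k=n}^{m}e^{2\pi i ak/d},$$
where the $a=d$ term drops out because $\bar\chi(d)=0$. The integration of Fourier analysis on $\mathbb{Z}/d\mathbb{Z}$ with the orthogonality of characters is the conceptual heart of the argument.

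The remaining steps are elementary exponential-sum estimates. I would bound the inner geometric sum by $\bigl|\sum_{k=n}^{m} e^{2\pi i ak/d}\bigr| \leq 1/|\sin(\pi a/d)|$ for $1\leq a\leq d-1$, then take absolute values (using $|\bar\chi(a)|\leq 1$) to get
$$\Bigl|\sum_{k=n}^{m}\chi(k)\Bigr| \leq \frac{1}{\sqrt{d}}\sum_{a=1}^{d-1}\frac{1}{|\sin(\pi a/d)|}.$$
The cotangent-type sum on the right is $O(d\log d)$: by the symmetry $a\leftrightarrow d-a$ it suffices to estimate $\sum_{1\leq a\leq d/2} 1/\sin(\pi a/d)$, and the bound $\sin(\pi a/d)\geq 2a/d$ on $[0,d/2]$ reduces it to $(d/2)\sum_{a\leq d/2}1/a = O(d\log d)$. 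Combining yields $O(\sqrt{d}\log d)$ as required.

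The only mild obstacle is the primitivity reduction together with the Gauss-sum norm $|\tau(\bar\chi)|=\sqrt{d}$; both are standard facts from elementary analytic number theory. After those are invoked, the rest is mechanical and does not affect the big-$O$ constant, so I do not expect any step to present genuine difficulty.
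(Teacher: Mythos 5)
Your argument is the standard Gauss-sum proof of P\'olya--Vinogradov and it is correct: the finite Fourier expansion via $\tau(\bar\chi)$, the geometric-sum bound $1/|\sin(\pi a/d)|$, the $a\leftrightarrow d-a$ symmetry, and Jordan's inequality $\sin\theta\ge 2\theta/\pi$ assemble exactly as you describe to give $O(\sqrt{d}\log d)$. The paper itself offers no proof of this theorem; it states it as a classical fact and immediately moves to the explicit refinement of Pomerance (its Theorem~\ref{theorem_explicit version of polya-vinogradov}), which is the form actually used downstream in Theorem~\ref{theorem_upperbound of L function}. So there is nothing in the paper to compare your route against; you have supplied the proof that the paper deliberately omits.

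One small point worth tightening: when you pass from an imprimitive $\chi$ mod $d$ to its inducing primitive $\chi^*$ mod $d^*$, the M\"obius sieve over divisors of $d/d^*$ gives a bound of the shape $\sigma_0(d/d^*)\sqrt{d^*}\log d^*$, and saying this is ``absorbed in the big-$O$'' needs a one-line justification, namely that $\sigma_0(n)\le 2\sqrt{n}$ for every $n$, hence $\sigma_0(d/d^*)\sqrt{d^*}\le 2\sqrt{d}$. Without that observation the divisor factor is not obviously harmless. With it, the reduction is airtight. Also note that for even $d$ the term $a=d/2$ in the cotangent sum contributes exactly $1$ and is covered by your estimate, so no separate case is needed, but it is worth being explicit that the symmetry pairing covers $1\le a<d/2$ and the midpoint is handled separately.
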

Pólya-Vinogradov Inequality is improved in \cite{pomerance2011remarks}. They suggest a more explicit version of inequality for the sum of values of a Dirichlet character on an interval.

\begin{theorem}[Theorem 1, \cite{pomerance2011remarks}]
    \label{theorem_explicit version of polya-vinogradov}
    Let $d$ be a positive integer and $\chi(k)$ is a primitive Dirichlet character modulus $d$. Then 
    $$
        \forall m, n \in \mathbb{N}, \sum_{k=n}^{m}\chi(k) \leq \begin{cases}
            d^{1/2}\left(\frac{2}{\pi^2}\ln d + \frac{4}{\pi^2}\ln{\ln d}+\frac{3}{2}\right) & \text{if } \chi \text{ is even.}\\
            d^{1/2}\left(\frac{1}{2\pi}\ln d + \frac{1}{\pi}\ln{\ln d}+1\right) & \text{if } \chi \text{ is odd.}
        \end{cases}.
    $$
\end{theorem}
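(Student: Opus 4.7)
The plan is to prove the explicit bound via the standard Fourier/Gauss-sum derivation of Pólya–Vinogradov, tracking all constants carefully and splitting the analysis according to the parity of $\chi$.

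First I would expand $\chi$ using the Gauss sum. Since $\chi$ is primitive modulo $d$, there is an identity
\[
\chi(k) \;=\; \frac{1}{\tau(\bar\chi)}\sum_{a=1}^{d-1}\bar\chi(a)\,e^{2\pi i a k/d},\qquad |\tau(\bar\chi)|=\sqrt{d}.
\]
Substituting into $S := \sum_{k=n}^{m}\chi(k)$ and interchanging the order of summation yields
\[
S \;=\; \frac{1}{\tau(\bar\chi)}\sum_{a=1}^{d-1}\bar\chi(a)\sum_{k=n}^{m} e^{2\pi i a k/d}.
\]
The inner geometric sum has modulus $|e^{2\pi i a(m+1)/d}-e^{2\pi i a n/d}|\,/\,|e^{2\pi i a/d}-1|\;\le\;1/|\sin(\pi a/d)|$. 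Using $|\tau(\bar\chi)|=\sqrt d$, this gives the preliminary estimate
\[
|S| \;\le\; \frac{1}{\sqrt d}\sum_{a=1}^{d-1}\frac{1}{|\sin(\pi a/d)|},
\]
which already yields the classical $O(\sqrt d\log d)$ bound but with sub-optimal constants.

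To reach the explicit constants in the statement, I would group the terms $a$ and $d-a$ in the sum. For an even character $\bar\chi(d-a)=\bar\chi(a)$, so pairing doubles the contribution and one can restrict to $1\le a\le d/2$; for an odd character the paired contributions partially cancel, producing the smaller $1/(2\pi)$ constant rather than $2/\pi^2$. Then I would replace $1/|\sin(\pi a/d)|$ by its sharper bound $d/(\pi a) + O(a/d)$ and sum: the leading term gives $\tfrac{d}{\pi}\sum_{a\le d/2}1/a = \tfrac{d}{\pi}(\ln(d/2)+\gamma) + O(1)$, which after dividing by $\sqrt d$ and multiplying by the parity factor ($2$ for even, giving $2/\pi^2$ after a further $1/\pi$ from the Gauss-sum refinement, versus $1/(2\pi)$ for odd) produces the main $\ln d$ term. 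The $\ln\ln d$ and additive constants arise by handling the small-$a$ regime (where the trivial bound on the geometric sum is sharper) via a dyadic split at $a\asymp\log d$, together with Euler–Maclaurin for the harmonic-type tail.

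The main obstacle will be the constants: the dominant $\ln d$ coefficient comes out of Abel summation against the partial Gauss-sum bound, and obtaining exactly $2/\pi^2$ (even) and $1/(2\pi)$ (odd) requires the sharp Fejér-kernel / Montgomery–Vaughan style estimate for $\sum_{a=1}^{d-1}\bar\chi(a)/\sin(\pi a/d)$, rather than the crude $\sum 1/|\sin|$ bound used above. The coefficients of $\ln\ln d$ and the additive $3/2$ and $1$ terms emerge only after choosing the dyadic cutoff optimally and absorbing lower-order error from the $O(a/d)$ remainder in the expansion of $1/\sin(\pi a/d)$. Everything else — Fourier inversion, Gauss-sum modulus, geometric-sum bound — is routine; the arithmetic of tracking these constants through the dyadic split is the only delicate part, and it is carried out in detail in \cite{pomerance2011remarks}.
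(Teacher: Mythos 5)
The paper offers no proof of this statement at all: it is imported verbatim from Pomerance's ``Remarks on the P\'olya--Vinogradov inequality'' and the ``proof'' in the paper consists only of the sentence ``The proof is in \cite[Theorem 1]{pomerance2011remarks}.'' So there is no in-paper argument for your sketch to be compared against; you are being asked, implicitly, to reproduce Pomerance's argument, and what you have written falls short of that in two respects.

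First, and most importantly, your proposal is an outline, not a proof. The entire content of the theorem (over and above the classical $O(\sqrt{d}\log d)$ bound, which is standard) is the explicit constants $\tfrac{2}{\pi^2}$, $\tfrac{4}{\pi^2}$, $\tfrac{3}{2}$ for even $\chi$ and $\tfrac{1}{2\pi}$, $\tfrac{1}{\pi}$, $1$ for odd $\chi$, and your write-up explicitly defers ``the arithmetic of tracking these constants'' to the reference. That is the same thing the paper does, dressed up in more words; it is not a proof.

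Second, several of the mechanisms you describe for how the constants arise are not accurate. The crude estimate $|S| \le \tfrac{1}{\sqrt d}\sum_{a=1}^{d-1} |\sin(\pi a/d)|^{-1}$ already has leading constant $\tfrac{2}{\pi}\approx 0.64$, which is far worse than both $\tfrac{2}{\pi^2}\approx 0.20$ and $\tfrac{1}{2\pi}\approx 0.16$; the improvement comes from exploiting the oscillation of $\bar\chi(a)$ inside the sum, which your sketch gestures at (``Montgomery--Vaughan style estimate'') but does not actually incorporate. Your parity story --- that for even $\chi$ the pairing of $a$ with $d-a$ ``doubles the contribution'' whereas for odd $\chi$ the pair ``partially cancels'' --- is not what happens: pairing converts the kernel into a cosine for one parity and a sine for the other, and the differing constants come from the different analytic behaviour of these two kernels (and from the fact that for odd $\chi$ one has access to the convergent series $\sum_a \bar\chi(a)/a = L(1,\bar\chi)$ with the full Gauss-sum identity), not from a doubling-vs.-cancellation dichotomy. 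Finally, the $\log\log d$ secondary term in Pomerance's argument does not come from a dyadic split at $a\asymp\log d$; it comes from truncating/smoothing the Fourier expansion at a parameter on the order of $d/\log d$ and optimising over it, and the additive constants $\tfrac{3}{2}$ and $1$ fall out of that optimisation. If you want a genuine proof here rather than a citation, you would need to carry out that smoothing argument with the constants tracked; as written, the proposal neither does so nor correctly identifies where the key numbers come from.
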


\begin{proof}
    The proof is in \cite[Theorem 1]{pomerance2011remarks}.
\end{proof}

\begin{theorem}
\label{theorem_upperbound of L function}
    For a negative integer $d$ and a Dirichlet character $\chi(m) = \left(\frac{d}{m}\right)$ modulus $|d|$,
    $$
    L(1,\chi) \leq \left(\frac{1}{2}+\frac{1}{2\pi}\right)\ln{|d|} + \frac{1}{\pi}\ln\ln{|d|} + 1
    $$
\end{theorem}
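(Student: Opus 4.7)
The plan is to use the integral representation $L(1,\chi)=\int_1^\infty S(x)\,x^{-2}\,dx$, where $S(x):=\sum_{m\le x}\chi(m)$, split the range at the natural balancing point $N=\sqrt{|d|}$, and control the two pieces by the trivial estimate and by Theorem~\ref{theorem_explicit version of polya-vinogradov} respectively.

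First I would derive the integral representation by Abel summation applied to $\sum_{m\ge 1}\chi(m)/m$; the boundary term $S(M)/M$ tends to $0$ as $M\to\infty$ because $|S(M)|=O(\sqrt{|d|}\log|d|)$ by Pólya--Vinogradov, and one checks $\int_m^{m+1}x^{-2}\,dx=\frac{1}{m(m+1)}$ to turn the resulting sum into an integral. On the short interval $[1,\sqrt{|d|}\,]$, the only bound available is the trivial $|S(x)|\le x$ coming from $|\chi(m)|\le 1$; this contributes at most $\int_1^{\sqrt{|d|}}x^{-1}\,dx=\tfrac{1}{2}\ln|d|$.

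For the tail $[\sqrt{|d|},\infty)$, I would invoke Theorem~\ref{theorem_explicit version of polya-vinogradov} in its odd-character case, since $d<0$ forces $\chi(-1)=\left(\tfrac{d}{-1}\right)=-1$. This gives a uniform bound $|S(x)|\le |d|^{1/2}\!\left(\tfrac{1}{2\pi}\ln|d|+\tfrac{1}{\pi}\ln\ln|d|+1\right)$. Since $\int_{\sqrt{|d|}}^{\infty}x^{-2}\,dx=|d|^{-1/2}$, the tail contributes exactly $\tfrac{1}{2\pi}\ln|d|+\tfrac{1}{\pi}\ln\ln|d|+1$. Adding the two pieces yields the claimed constants $\bigl(\tfrac12+\tfrac{1}{2\pi}\bigr)\ln|d|+\tfrac{1}{\pi}\ln\ln|d|+1$ with nothing left over, confirming that $N=\sqrt{|d|}$ is the correct split point.

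The only delicate step is justifying that Theorem~\ref{theorem_explicit version of polya-vinogradov} applies, i.e.\ that $\chi=(d/\cdot)$ is a \emph{primitive} Dirichlet character modulo $|d|$; this holds because $d$ is a fundamental discriminant of an imaginary quadratic field in the setting of this paper. Beyond this, the argument is entirely mechanical: no optimization over $N$ is needed once one notices the tight matching of the two pieces at $N=\sqrt{|d|}$.
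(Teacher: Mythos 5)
Your proposal matches the paper's proof essentially step for step: both derive the integral representation $L(1,\chi)=\int_1^\infty\bigl(\sum_{n\le x}\chi(n)\bigr)x^{-2}\,dx$, split at $\sqrt{|d|}$, bound the initial segment trivially by $\tfrac12\ln|d|$, and apply the odd-character case of Theorem~\ref{theorem_explicit version of polya-vinogradov} to the tail. Your additional remarks on the Abel-summation boundary term and the primitivity of $\chi$ are correct side notes that the paper leaves implicit, but the argument is the same.
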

\begin{proof}
    This can be proved as follows.
    \begin{align*}
    L(1, \chi) &= \sum_{n \geq 1} {\chi(n)n^{-1}} = \sum_{n \geq 1}{\left(\chi(n)\int_{n}^{\infty}{x^{-2}dx}\right)}\\
    &=\int_{1}^{\infty}{\left(\sum_{n \leq x}{\chi(n)}\right)x^{-2}dx}\\
    \end{align*}
    Since $d$ is negative, $\chi(-1)=-1$ which is odd Dirichlet character. By Theorem \ref{theorem_explicit version of polya-vinogradov},\\
    \begin{align*}
    &\leq \int_{1}^{\sqrt{|d|}}{x \cdot x^{-2}dx} + \int_{\sqrt{|d|}}^{\infty}{\left(\frac{1}{2\pi}|d|^{1/2}\ln{|d|} + \frac{1}{\pi}|d|^{1/2}\ln{\ln {|d|}}+|d|^{1/2}\right) \cdot x^{-2}dx}\\
    &= \left(\frac{1}{2}+\frac{1}{2\pi}\right)\ln{|d|} + \frac{1}{\pi}\ln\ln{|d|} + 1\\
    \end{align*}
    \qed
\end{proof}

Theorem \ref{theorem_upperbound of L function} leads directly to the upper bound of the class number. As $|d|=|4p-t^2| \leq 4p$,

\begin{align}
\label{equation_upperbound of class number}
h(d) = \frac{\sqrt{|d|}}{\pi}L(1,\chi) \leq \left(\frac{1+\pi}{\pi^2}\right)\sqrt{p}\ln{(4p)} + \frac{2}{\pi^2}\sqrt{p}\ln\ln{(4p)} + \frac{2}{\pi}\sqrt{p}.
\end{align}
We can see that the result of the calculating class number $h(d)$ using \cite{biasse2016efficient} is bounded by the above inequality (\ref{equation_upperbound of class number}), which means the number of iterations in Grover search algorithm of Figure \ref{fig:grover search algorithm} is bounded as follows:

$$
\frac{\sqrt{2}\pi p^{1/4}}{\sqrt{(\pi+1)\ln{(4p)}+2\ln\ln{(4p)}+2\pi}} \leq \sqrt{\frac{2p}{h(d)}} \leq 4.251p^{1/4}\sqrt{\log{p}}.
$$
The lower bound is approximately $2.622\frac{p^{1/4}}{\sqrt{\log{p} + \Theta(\log\log{p})}}$.
\section{Implementation of Our Attack}
\label{section_implementation of our oracle}
\subsection{Division Polynomials and Recurrence Relation}
\label{subsubsection_Division Polynomials and Recurrence Relation}

Given a prime $p$ and a positive integer $\ell$, one can calculate a division polynomial $\psi_{\ell}(x,y) \in \mathbb{F}_p[x,y]$ such that $\psi_{\ell}(x, y) = 0$ if and only if $(x,y) \in E[\ell]$. For a given Weierstrass form $E: y^2 = x^3 + Ax + B$ where $A, B \in \mathbb{F}_p$, we define the \textit{division polynomials} as follows :

\begin{align*}
\psi_{-1} &= -1,\\
\psi_0 &= 0,\\
\psi_1 &= 1,\\
\psi_2 &= 2y,\\
\psi_3 &= 3x^4 + 6Ax^2 + 12Bx - A^2\\
\psi_4 &= 2y(2x^6 + 10Ax^4 + 40Bx^3 - 10A^2x^2 - 8ABx - 2A^3 - 16B^2),\\
\psi_{2n+1} &= \psi_{n+2}\psi_{n}^3-\psi_{n-1}\psi_{n+1}^3 = g_1(\psi_{n-1}, \psi_{n}, \psi_{n+1}, \psi_{n+2}) &\text{for } n \leq 2\\
\psi_{2n} &= \frac{\psi_{n-1}^2\psi_n \psi_{n+2} - \psi_{n-2}\psi_n \psi_{n+1}^2}{\psi_2} = g_2(\psi_{n-2}, \psi_{n-1}, \psi_{n}, \psi_{n+1}, \psi_{n+2}) &\text{for } n \leq 3
\end{align*}

To compute the division polynomial, we need the value of $y$, which involves calculating a square root in $\mathbb{F}_p$. When $p \equiv 1 \mod 4$, this can be done using the Tonelli-Shanks algorithm, which requires $O(\log^2 p)$ multiplications in $\mathbb{F}_p$, making it relatively expensive. However, if we compute the division polynomial in the polynomial ring $\mathbb{F}_p[y]$, using the defining relation $y^2 = x^3+Ax+B \in \mathbb{F}_p$, we can avoid explicitly computing the square root. This approach remains valid even in cases where such a $y$ does not exist in $\mathbb{F}_p$. Moreover, every division polynomial $\psi_\ell$ is either of the form $y\cdot f(x)$ or $f(x)$ for some polynomial $f(x) \in \mathbb{F}_p[x]$, meaning that it only requires $\log p$ qubits to represent $\psi_\ell \in \mathbb{F}_p[y]$. In particular, when $\psi_\ell = 0$, it corresponds uniquely to the zero polynomial in $\mathbb{F}_p[y]$.

Given a tuple $(\psi_k, ..., \psi_{k+9}) \in \mathbb{F}_p[y]^{10}$, we obtain $(\psi_{2k+4}, ..., \psi_{2k+15})$ using the above recurrence relation. Defining $\Psi_k := (\psi_k, ..., \psi_{k+9}) \in \mathbb{F}_p[y]^{10}$, we can determine $\Psi_{2k+4}, \Psi_{2k+5}$ and $\Psi_{2k+6}$ given $\Psi_{k}$. For $m > 5$, $\Psi_m$ can be computed as follows:

$$
\Psi_m = \begin{cases}
f_1(\Psi_{(m - 4)/2}) & \text{if $m$ is even}\\
f_2(\Psi_{(m - 5)/2}) & \text{if $m$ is odd}
\end{cases}
$$
where

\begin{align}
\label{equation_f1 and f2}
f_1(\Psi_k)[i] &= 
    \begin{cases}
    g_2(\Psi_k[i/2], \ldots, \Psi_k[i/2+4]) & \text{if } i \text{ is even} \\
    g_1(\Psi_k[(i-1)/2], \ldots, \Psi_k[(i-1)/2+4]) & \text{if } i \text{ is odd}
    \end{cases} \\
f_2(\Psi_k)[i] &= 
    \begin{cases}
    g_2(\Psi_k[i/2+1], \ldots, \Psi_k[i/2+5]) & \text{if } i \text{ is even} \\
    g_1(\Psi_k[(i-1)/2 + 1], \ldots, \Psi_k[(i-1)/2+5]) & \text{if } i \text{ is odd}
    \end{cases}\notag
\end{align}
for $0 \leq i < 10$ and $\Psi_k[i]$ denotes the $i$-th element $\psi_{k+i}$ in the tuple $\Psi_k$. Algorithm \ref{algorithm_division polynomial} is the quantum algorithm that computes division polynomials.

\begin{algorithm}
\caption{Algorithm $\mathsf{DivisionPolynomial_{p, \sigma}}$}
\label{algorithm_division polynomial}
\hspace*{\algorithmicindent} \textbf{Input:} A prime $p$, a cardinality $\sigma \in \mathbb{N}$ and a quantum state $\ket{A}\ket{B}\ket{x}$ where $A, B, x \in \mathbb{F}_p$\\
\hspace*{\algorithmicindent} \textbf{Output:} A quantum state $\ket{\psi_\sigma(A,B,x)}$.
\begin{algorithmic}[1]
\State Compute $\ket{\Psi_{\sigma_r}}$ from $\ket{A}\ket{B}\ket{x}$.
\For{$0 \leq i < r$}
\State Compute $\ket{\Psi_{\sigma_{r-i-1}}} \leftarrow \ket{f_{b_i}(\Psi_{\sigma_{r-i}})}$.
\EndFor
\For{$r-1 > i \geq 0$}
\State Uncompute $\ket{\Psi_{\sigma_{r-i-1}}} \leftarrow \ket{f_{b_i}^{-1}(\Psi_{\sigma_{r-i}})}$ and discard $\ket{\Psi_{\sigma_{r-i-1}}}$.
\EndFor
\State Uncompute $\ket{\Psi_{\sigma_r}}$ from $\ket{A}\ket{B}\ket{x}$.\\
\Return $\ket{\psi_\sigma(A,B,x)}$.
\end{algorithmic}
\end{algorithm}
In Algorithm \ref{algorithm_division polynomial}, $f_1$ and $f_2$ are the functions in Equation (\ref{equation_f1 and f2}) and $\sigma_i$'s and $b_i$'s are precomputed as follows:
\begin{align*}
\sigma_0 &= \sigma \text{ and } \sigma_r \leq 5\\
\sigma_{i+1} &= \begin{cases}
(\sigma_i -4)/2 & \text{if $\sigma_i$ is even and }\sigma_i > 5\\
(\sigma_i -5)/2 & \text{if $\sigma_i$ is odd and }\sigma_i > 5\\
\sigma_i & \text{if $\sigma_i \leq 5$}
\end{cases}\\
b_{r-i-1} &= \begin{cases}
1 & \text{if $\sigma_{i}$ is even}\\
2 & \text{if $\sigma_{i}$ is odd}
\end{cases}
\end{align*}
where $r$ is the smallest positive integer such that $\sigma_r = \sigma_{r+1}$ and $0 \leq i < r$.

Through this approach, we can deduce that computing $\psi_\ell$ requires $O(\log\ell)$ multiplication in $\mathbb{F}_p$ which translates to $O(\log^2{p}\log\ell)$ bit operations. This process could be further optimized in future research.

\subsection{Search Oracle $O_{p,\sigma}$ With Rational Points}

In this section, we describe our quantum search oracle used to forge a quantum banknote. The oracle $O_{p, \sigma}$ consists of the evaluation of division polynomials and the phase rotation via the $n$-controlled-$Z$ gate $Z_n$. In order to filter out all but the target curves, we verify that all rational points are annihilated by $\sigma$, while all rational points on its quadratic twist are annihilated by $2p+2-\sigma$. Given $x \in \mathbb{F}_p$, we determine the appropriate case by checking whether $x^3+Ax+B$ is a quadratic residue. If it is, we verify annihilation by $\sigma$; otherwise, we verify annihilation by $2p+2-\sigma$. We then define $G_{p, \sigma}$ as a function that calculates the division polynomial with respect to the original curve and its twist. Instead of applying scalar multiplications, we compute division polynomials to verify annihilation, as this approach is computationally more efficient.

\begin{remark}
Using the Montgomery ladder algorithm, scalar multiplication may be faster than with division polynomials. Since the algorithm operates solely on $x$-coordinates, it also avoids square-root computations. However, in order to use the Montgomery ladder, Montgomery coefficient must be derived from a given $j$-invariant. The relation between them is given by $j=\frac{256(A^2-3)^3}{A^2-4}$. This implies that obtaining the corresponding Montgomery coefficient requires solving a cubic equation and computing a square root.
\end{remark}

For elliptic curves $E_{A,B}$ such that $\#E_{A,B}\neq \sigma$, some rational points may still be annihilated by $\sigma$. The set of such points forms a subgroup $S\subset E_{A,B}(\mathbb{F}_p)$. If $S$ is a proper subgroup, its complement $E_{A,B}\setminus S$ consists of the union of all other cosets of $S$, ensuring that $|E_{A,B}\setminus S| \geq |S|$. Consequently, if at least one rational point is not annihilated by $\sigma$, then at least half of the rational points remain unannihilated. We exploit this fact to distinguish the target curve from the others by computing $F_{p,\sigma,\tau}$.

For a positive integer $\tau$, given a rational point $x \in \mathbb{F}_p$, $F_{p,\sigma,\tau}(A, B, x)$ is defined as follows:
\begin{align*}
G_{p,\sigma}(A, B, x) &= \begin{cases}
    \psi_{\sigma}(A, B, x) & \text{if }(x^3+Ax+B)^{\frac{p-1}{2}} \equiv 1 \mod p\\
    \psi_{2p+2-\sigma}(A, B, x) & \text{otherwise}
\end{cases}\\
F_{p,\sigma, \tau}(A, B, x) &= \sum_{i=0}^{\tau-1}{G_{p,\sigma}(A, B, x+i)}
\end{align*}
The function $F_{p,\sigma, \tau}$ corresponds to Algorithm \ref{algorithm_F}. $\mathsf{DivisionPolynomial_{p,\sigma}}$ refers to Algorithm \ref{algorithm_division polynomial}.
\begin{algorithm}[h]
\caption{Algorithm $F_{p,\sigma,\tau}$}
\label{algorithm_F}
\hspace*{\algorithmicindent} \textbf{Input:} A prime $p$, a cardinality $\sigma \in \mathbb{N}$ and a quantum state $\ket{A}\ket{B}\ket{x}$ where $A, B, x \in \mathbb{F}_p$ and $0 < |\sigma -p -1| \leq 2\sqrt{p}$.\\
\hspace*{\algorithmicindent} \textbf{Output:} A quantum state $\ket{F_{p,\sigma,\tau}(A,B,x)}$.
\begin{algorithmic}[1]
\For{$0\leq i < \tau$}
\State Compute Euler's criteria $\ket{t} \leftarrow \ket{(x^3+Ax+B)^{\frac{p-1}{2}} \mod p}$.
\State Compute the division polynomials $\ket{r_{i,1}} \leftarrow \mathsf{DivisionPolynomial}_{p,\sigma}(\ket{A}\ket{B}\ket{x})$ and $\ket{r_{i,2}} \leftarrow \mathsf{DivisionPolynomial}_{p,2p+2-\sigma}(\ket{A}\ket{B}\ket{x})$.
\State Compute $\ket{r} \leftarrow \ket{r + (t\equiv1) \times r_{i,1} + (t\not\equiv1)\times r_{i,2}}$.
\State Uncompute $\ket{t}$ and discard it.
\State $\ket{x} \leftarrow \ket{x+1}$.
\EndFor
\For{$\tau > i \geq 0$}
\State $\ket{x} \leftarrow \ket{x-1}$.
\State Compute Euler's criteria $\ket{t} \leftarrow \ket{(x^3+Ax+B)^{\frac{p-1}{2}} \mod p}$.
\State Uncompute the division polynomials $\ket{r_{i,1}} \leftarrow \mathsf{DivisionPolynomial}_{p,\sigma}^{-1}(\ket{A}\ket{B}\ket{x})$ and $\ket{r_{i,2}} \leftarrow \mathsf{DivisionPolynomial}_{p,2p+2-\sigma}^{-1}(\ket{A}\ket{B}\ket{x})$ and discard $\ket{r_{i,1}}$ and $\ket{r_{i,2}}$.
\State Uncompute $\ket{t}$ and discard it.
\EndFor\\
\Return $\ket{r}$
\end{algorithmic}
\end{algorithm}

Here we show that Algorithm \ref{algorithm_F} runs correctly. The proof is based on Theorem \ref{theorem_no exception case} which is also known as Mestre's theorem.

\begin{theorem}
\label{theorem_no exception case}
    Given a large prime $p > 2^{20}$ and an positive integer $\sigma$ such that $0<|\sigma - p-1| \leq 2\sqrt{p}$, there is no elliptic curve $E_{A,B}$ defined over $\mathbb{F}_p$ such that $[\sigma]P=0_{E}$ for all rational points $P \in E_{A,B}(\mathbb{F}_p)$, $[2p+2-\sigma]Q=0_{E^t}$ for all rational points $Q \in E^t_{A,B}(\mathbb{F}_p)$ and the cardinality of $E_{A,B}$ differs from $\sigma$, where $E^t$ is a quadratic twist of $E$.
\end{theorem}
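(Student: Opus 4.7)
The plan is to argue by contradiction. Suppose some elliptic curve $E := E_{A,B}$ satisfies the hypotheses but $N := \#E(\mathbb{F}_p) \neq \sigma$. By Theorem \ref{theorem_elliptic curve group structure}, I write $E(\mathbb{F}_p) \cong \mathbb{Z}/m\mathbb{Z} \times \mathbb{Z}/mk\mathbb{Z}$ with $m \mid p-1$, so the exponent of the group is $mk$; analogously, $E^t(\mathbb{F}_p) \cong \mathbb{Z}/m'\mathbb{Z} \times \mathbb{Z}/m'k'\mathbb{Z}$ with $m' \mid p-1$ and exponent $m'k'$. The annihilation hypotheses translate to $mk \mid \sigma$ and $m'k' \mid 2p+2-\sigma$; since $mk \mid N$ and $m'k' \mid N^t = 2p+2-N$ are automatic, subtracting yields $\mathrm{lcm}(mk, m'k') \mid |\sigma - N|$. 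Applying Hasse's theorem (Theorem \ref{theorem_hasse theorem}) to both $\sigma$ and $N$ bounds $|\sigma - N| \leq 4\sqrt{p}$, so the assumption $\sigma \neq N$ forces $\mathrm{lcm}(mk, m'k') \leq 4\sqrt{p}$, and in particular $mk,\, m'k' \leq 4\sqrt{p}$.

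From $N = m \cdot (mk) \geq p+1-2\sqrt{p}$ together with $mk \leq 4\sqrt{p}$ I would then extract $m > \sqrt{p}/4 - 1/2$, and symmetrically $m' > \sqrt{p}/4 - 1/2$. The crux of the argument, and what I expect to be the main obstacle, is a Chinese Remainder Theorem constraint on the Frobenius trace $t := p+1-N$. Since $m \mid N$ and $p \equiv 1 \pmod{m}$, one gets $t \equiv 2 \pmod{m}$; applying the same reasoning to $E^t$ (using $N^t = p+1+t$ from Theorem \ref{theorem_quadratic twist}) yields $t \equiv -2 \pmod{m'}$. Compatibility of these two congruences forces $\gcd(m, m') \mid 4$. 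Combined with $\mathrm{lcm}(m, m') \leq \mathrm{lcm}(mk, m'k') \leq 4\sqrt{p}$, I obtain $mm' = \gcd(m,m') \cdot \mathrm{lcm}(m,m') \leq 16\sqrt{p}$.

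To close, I compare this upper bound against $mm' \geq (\sqrt{p}/4 - 1/2)^2 = p/16 - \sqrt{p}/4 + 1/4$. The resulting inequality $p/16 - \sqrt{p}/4 + 1/4 \leq 16\sqrt{p}$ is violated as soon as $\sqrt{p}$ exceeds roughly $260$, which is amply ensured by the hypothesis $p > 2^{20}$, giving the desired contradiction. Modulo the trace-CRT identity $\gcd(m,m') \mid 4$, which is the genuinely clever ingredient, the remaining steps are routine bookkeeping with the Hasse bound and the elliptic-curve structure theorem.
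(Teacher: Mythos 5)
Your proof is correct, and it takes a genuinely different route from the paper's.

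The paper argues via the quantity $\gcd(m_1^2 k_1, m_2^2 k_2) = \gcd(N, 2p+2)$: it bounds this \emph{above} by $2\gcd(N,p+1) \leq 2|N-p-1| \leq 4\sqrt{p}$ (splitting off the case $N = p+1$ separately) and, through a fairly intricate chain of estimates on $m_i$ and $k_i$, bounds the same quantity \emph{below} by $p/40$. Your proof instead targets $mm'$ directly. The decisive new ingredient is the observation that the Frobenius trace $t = p+1-N$ satisfies $t \equiv 2 \pmod{m}$ and $t \equiv -2 \pmod{m'}$ (both from $m, m' \mid p-1$ and from $m, m'$ dividing the respective group orders), so $\gcd(m,m') \mid 4$; combined with $\mathrm{lcm}(m,m') \leq \mathrm{lcm}(mk, m'k') \leq 4\sqrt{p}$ this gives $mm' \leq 16\sqrt{p}$, while the Hasse bound plus $mk \leq 4\sqrt{p}$ forces $m, m' > \sqrt{p}/4 - 1/2$, so $mm' > p/16 - \sqrt{p}/4 + 1/4$, which already contradicts $mm' \leq 16\sqrt{p}$ once $p > 2^{20}$. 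Your route never needs the side analysis around $\gcd(N, 2p+2)$, absorbs the $N = p+1$ sub-case automatically (there $t=0$ forces $m, m' \leq 2$, contradicting $m > \sqrt{p}/4 - 1/2$), and cleanly localizes all arithmetic information in the single congruence $4 \equiv 0 \pmod{\gcd(m,m')}$. I checked the intermediate steps: $\mathrm{lcm}(m,m')$ does divide $\mathrm{lcm}(mk, m'k')$, the bound $m \geq (\sqrt{p}-1)^2/(4\sqrt{p}) > \sqrt{p}/4 - 1/2$ is correct, and $u^2/16 - u/4 + 1/4 \leq 16u$ (with $u = \sqrt{p}$) indeed fails once $u > 260$, far below $2^{10}$. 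The argument is sound as written and is, if anything, tighter than the one in the paper.
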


\begin{proof}
Suppose that a curve $E_{A,B}$ satisfies the condition. According to Theorem \ref{theorem_elliptic curve group structure}, let's say that $m_1, k_1, m_2, k_2$ are positive integers such that
\begin{align*}
    E_{A,B} &\cong \mathbb{Z}/m_1\mathbb{Z} \times \mathbb{Z}/m_1k_1\mathbb{Z}\\
    E^t_{A,B} &\cong \mathbb{Z}/m_2\mathbb{Z} \times \mathbb{Z}/m_2k_2\mathbb{Z}
\end{align*}
where $E^t_{A,B}$ is a quadratic twist of $E_{A,B}$ and $m_1, m_2 | (p-1)$.
Let's denote the cardinality $\#E_{A,B}$ as $\sigma'$. Since $m_1^2k_1 = \sigma'$ and $m_2^2k_2 = 2p+2-\sigma'$ by Theorem \ref{theorem_quadratic twist}, we get
\begin{align}
    \label{equation_1}
    \gcd(m_1^2k_1, m_2^2k_2) &= \gcd(\sigma', 2p+2-\sigma')\\
    &= \gcd(\sigma', 2p+2)\notag
\end{align}
According to the condition, all elements in $\mathbb{Z}/m_1k_1\mathbb{Z}$ are annihilated by both $\sigma$ and $\sigma'$. Thus $m_1k_1| (\sigma - \sigma')$ and likewise, we deduce that $m_2k_2|(2p+2-\sigma - 2p -2 +\sigma')=(\sigma'-\sigma)$. By Hasse's theorem, we obtain
\begin{align}
    \label{equation_2}
    \frac{m_1k_1 \times m_2k_2}{\gcd(m_1k_1,m_2k_2)} \bigg\rvert |\sigma-\sigma'| &\leq 4\sqrt{p}
\end{align}
Since $m_1^2k_1^2 \geq m_1^2k_1 = \#E_{A,B}(\mathbb{F}_p)\geq p+1-2\sqrt{p}$, we get $m_1k_1 \geq \sqrt{p} -1$. Likewise, $m_2k_2 \geq \sqrt{p} -1$ on the twist. As $\sigma - \sigma' \neq 0$,
\begin{align*}
    \gcd(m_1k_1, m_2k_2) \geq \frac{m_1k_1 \times m_2k_2}{4\sqrt{p}} &\geq \frac{(\sqrt{p}-1)^2}{4\sqrt{p}} > \frac{\sqrt{p}}{4} -1
\end{align*}
As $m_1, m_2 | (p-1)$, we get $\gcd(p+1, m_i) \leq 2$ and by (\ref{equation_1}),
\begin{align*}
&\gcd(m_1k_1, m_2k_2) = 2\gcd(k_1, k_2) \text{ or } \gcd(k_1, k_2)\\
\Rightarrow &k_1, k_2 > \frac{\sqrt{p}}{8}-\frac{1}{2}
\end{align*}
Again by Hasse's theorem, $m_1^2k_1 = \#E_{A,B}(\mathbb{F}_p) \leq p+1+2\sqrt{p}$ and
\begin{align*}
    m_1 &\leq \frac{\sqrt{p}+1}{\sqrt{k_1}} < 2\sqrt{2}\frac{\sqrt{p}+1}{\sqrt{\sqrt{p}-4}} < 3p^{1/4}\\
    \Rightarrow m_1k_1 &= \frac{\#E_{A,B}(\mathbb{F}_p)}{m_1} \geq \frac{p+1-2\sqrt{p}}{m_1} > \frac{p^{3/4}}{\sqrt{10}}
\end{align*}
Likewise, we can deduce that $m_2k_2 > \frac{p^{3/4}}{\sqrt{10}}$. This again leads to (\ref{equation_2}) and we get
\begin{align*}
    \gcd(m_1k_1, m_2k_2) &> \frac{m_1k_1 \times m_2k_2}{4\sqrt{p}} > \frac{p}{40}\\
    \Rightarrow \frac{p}{40} < \gcd(m_1^2k_1, m_2^2k_2) &\leq 2\gcd(p+1,\sigma')
\end{align*}

If $\sigma' \neq p+1$, then $\gcd(p+1,\sigma') \leq |\sigma' -p -1| \leq 2\sqrt{p}$ which leads to a contradiction. Otherwise, we have $\sigma' = p+1$. It follows that $m_1 =2 \text{ or } 1$, implying that $\frac{p+1}{2} | \sigma$. This is impossible since $\sigma \neq p+1$.
\qed
\end{proof}
Based on Theorem \ref{theorem_no exception case}, $F_{p,\sigma, \tau}$ can be used to filter out all but the target curve of cardinality $\sigma$. We conclude by Corollary \ref{corollary_correctness of F}.

\begin{corollary}
\label{corollary_correctness of F}
    If $\#E_{A,B}(\mathbb{F}_p) = \sigma$, then $F_{p, \sigma, \tau}(A,B,x)=0$ for all $x \in \mathbb{F}_p$, otherwise, the probability that $F_{p, \sigma, \tau}(A,B,x)=0$ for $x \in \mathbb{F}_p$ is at most $\left(\frac{3}{4} + \Theta\left(\frac{1}{\sqrt{p}}\right)\right)^\tau$.
\end{corollary}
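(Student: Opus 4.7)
The plan is to split along the two cases of the statement. In the forward direction, where $\#E_{A,B}(\mathbb{F}_p) = \sigma$, I would prove the stronger fact that each summand $G_{p,\sigma}(A,B,x)$ already vanishes for every $x \in \mathbb{F}_p$, which makes $F_{p,\sigma,\tau}=0$ trivially. When $x^3+Ax+B$ is a nonzero square there is $y \in \mathbb{F}_p$ with $(x,y) \in E_{A,B}(\mathbb{F}_p)$; Lagrange's theorem gives $[\sigma](x,y) = 0_E$, so $\psi_\sigma$ vanishes at both $(x,y)$ and $(x,-y)$, which forces both coefficients of the reduced polynomial $f(x) + y\,g(x) \in \mathbb{F}_p[y]/(y^2 - x^3 - Ax - B)$ to be zero. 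When $x^3+Ax+B$ is a non-residue, I would pick $y \in \mathbb{F}_{p^2}\setminus\mathbb{F}_p$ with $y^2 = x^3+Ax+B$; then $\pi(x,y) = (x,-y) = -(x,y)$, so substituting into the Frobenius characteristic equation $\pi^2 - t\pi + p = 0$ with $t = p+1-\sigma$ collapses to $(2p+2-\sigma)(x,y) = 0_E$, and the same coefficient-vanishing argument applied to the conjugate pair $(x,\pm y)$ yields $\psi_{2p+2-\sigma}(A,B,x) = 0$.

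For the reverse direction I would invoke Theorem \ref{theorem_no exception case} applied to the annihilated subgroups $S_1 = \{P \in E_{A,B}(\mathbb{F}_p) : [\sigma]P = 0_E\}$ and $S_2 = \{Q \in E^t_{A,B}(\mathbb{F}_p) : [2p+2-\sigma]Q = 0_{E^t}\}$: when $\#E_{A,B}(\mathbb{F}_p) \neq \sigma$, at least one of these subgroups is proper, so WLOG $|S_1| \leq \#E_{A,B}(\mathbb{F}_p)/2$. The set of bad $x$ (those on which $G_{p,\sigma}$ vanishes) consists of the $x$-coordinates of points in $S_1$ (residue case) together with those arising from $S_2$ via the twist isomorphism (non-residue case). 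Pairing $\{P,-P\}$, the count is at most $\tfrac12|S_1| + \tfrac12|S_2| + O(1) \leq \tfrac14 \#E_{A,B}(\mathbb{F}_p) + \tfrac12 \#E^t_{A,B}(\mathbb{F}_p) + O(1)$, which using Theorem \ref{theorem_quadratic twist} ($\#E + \#E^t = 2p+2$) and Hasse's theorem simplifies to $3p/4 + O(\sqrt p)$; this gives a bad-set density of $3/4 + \Theta(1/\sqrt p)$.

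To lift this per-$x$ bound to the $\tau$-fold statement, I would observe that $F_{p,\sigma,\tau}(A,B,x) = 0$ is implied by $G_{p,\sigma}(A,B,x+i) = 0$ for every $0 \leq i < \tau$, and then treat the $\tau$ deterministic shifts as essentially independent samples from $\mathbb{F}_p$ — each $x+i$ being uniform in $\mathbb{F}_p$ as $x$ is — to obtain the claimed $(3/4 + \Theta(1/\sqrt p))^\tau$. The main obstacle is precisely this independence step: the shifts $x, x+1, \ldots, x+\tau-1$ form a deterministic arithmetic progression, so a fully rigorous proof would require an equidistribution or character-sum argument showing that the bad set of $G_{p,\sigma}$ is not concentrated on short progressions; one also needs to absorb an $O(1/p)$ slack coming from the unlikely cancellations in which $F = 0$ despite some $G_i \neq 0$, which is comfortably dominated by the $\Theta(1/\sqrt p)$ term in the stated bound.
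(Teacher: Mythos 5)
Your proposal is correct and follows essentially the same route as the paper: use Theorem~\ref{theorem_no exception case} to guarantee a proper annihilated subgroup on at least one of $E$, $E^t$, bound the density of bad $x$ by roughly $3/4 + \Theta(1/\sqrt p)$, and exponentiate over the $\tau$ shifts under the same heuristic-independence assumption that the paper states explicitly (``Since there is no known algebraic relation between $x$ and $x+i$\ldots''). You additionally spell out the forward direction via the Frobenius characteristic equation (which the paper dismisses as trivial) and correctly identify the arithmetic-progression independence step as the one non-rigorous point, which neither you nor the paper closes.
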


\begin{proof}
    It is trivial when the target curve is given. Let $E_{A,B}$ has different cardinality from $\sigma$. By Theorem \ref{theorem_no exception case}, there is at least one rational point which is not annihilated among all rational points of $E_{A,B}$ and its twist $E^t_{A,B}$. Without loss of generality, let's assume that we can find one in $E_{A,B}$. The set of all rational points in $E_{A,B}$ annihilated by $\sigma$ is a proper subgroup $S$ of $E_{A,B}(\mathbb{F}_p)$. Then $|E_{A,B}(\mathbb{F}_p) \setminus S| > |S|$, since $E_{A,B}(\mathbb{F}_p) \setminus S$ contains at least one coset of $S$. When the all rational points on the quadratic twist satisfy the condition, the ratio of $S$ among $E_{A,B}(\mathbb{F}_p) \cup E^t_{A,B}(\mathbb{F}_p)$ is at most $\frac{p+1+2\sqrt{p}}{4p + 4} = \frac{1}{4} + \Theta\left(\frac{1}{\sqrt{p}}\right)$.

    Since there is no known algebraic relation between $x$ and $x+i$ from the perspective of the elliptic curve group, we assume that adding the integer $i$ acts like a random sampling of rational points on the elliptic curve. From the set of $\{x, x+1, ..., x+\tau-1\}$, the probability that $G_{p,\sigma}(A, B, x+i) = 0$ for all $i \in [0,\tau)$ is less than $\left(1-\frac{1}{4} + \Theta\left(\frac{1}{\sqrt{p}}\right)\right)^{\tau}$, which leads to that the probability $F_{p,\sigma,\tau}(A, B, x) = 0$ is less than $\left(\frac{3}{4} + \Theta\left(\frac{1}{\sqrt{p}}\right)\right)^{\tau}$.
    \qed
\end{proof}

Using Algorithm \ref{algorithm_F} as a building block, we replace the search oracle $O_{p,\sigma}$ in Figure \ref{fig:grover search algorithm} by Algorithm \ref{algorithm_our search oracle}. $Z_n$ is the $n$-controlled Pauli-$Z$ gate combined with $n$-bit NOT gates, which flips the phase only if it takes as input $\ket{0^n}$.

\begin{algorithm}
\caption{Algorithm \textsf{OurOracle} $O_{p, \sigma}$}
\label{algorithm_our search oracle}
\hspace*{\algorithmicindent} \textbf{Input:} A prime $p$, a cardinality $\sigma \in \mathbb{N}$ and a quantum state $\ket{j}\ket{b}$ where $(j, b) \in \mathbb{F}_p \times \mathbb{Z}$ and $0\leq b \leq 5$.\\
\hspace*{\algorithmicindent} \textbf{Output:} A quantum state $-\ket{j}\ket{b}$ if the cardinality of $E_{j,b}$ equals to $\sigma$, outputs $\ket{j}\ket{b}$ otherwise.
\begin{algorithmic}[1]
\State Set $n \leftarrow \lceil \log p\rceil$ and $\tau \leftarrow 3\lceil \log p\rceil$.
\State Compute $\ket{A}\ket{B} \leftarrow \mathsf{GetWeierstrassPair}_p(\ket{j}\ket{b})$
\State Set $\ket{x} \leftarrow \ket{0^n}$.
\State Compute $\ket{r} \leftarrow F_{p,\sigma,\tau}(\ket{A}\ket{B}\ket{x})$.
\State Compute $\ket{r} \leftarrow Z_{n}(\ket{r})$. \Comment{Flip the phase only if $r=0^n$}
\State Uncompute $\ket{r} \leftarrow F_{p,\sigma,\tau}^{-1}(\ket{A}\ket{B}\ket{x})$ and discard $\ket{x}$ and $\ket{r}$.
\State Uncompute $\ket{A}\ket{B} \leftarrow \mathsf{GetWeierstrassPair}_p^{-1}(\ket{j}\ket{b})$ and discard $\ket{A}\ket{B}$.\\
\Return $\ket{j}\ket{b}$.
\end{algorithmic}
\end{algorithm}

\begin{figure}[h]
    \centering
    \includegraphics[width=0.8\linewidth]{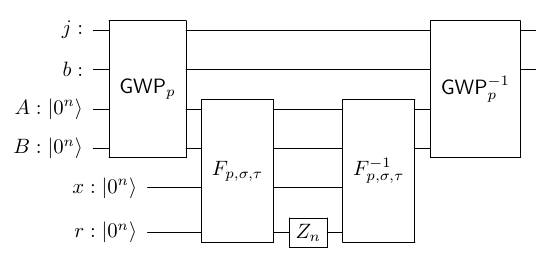}
    \caption{The quantum circuit of \textsf{OurOracle} $O_{p, \sigma}$. $\mathsf{GWP}_p$ gate represents the $\mathsf{GetWeierstrassPair}_p$ algorithm and $Z_n$ flips the phase only if $r=\ket{0^n}$.}
    \label{fig:the quantum circuit of the oracle of the division polynomial}
\end{figure}

\begin{theorem}
\label{theorem_correctness of our oracle}
Algorithm \ref{algorithm_our search oracle} runs correctly for a large prime $p > 2^{20}$ and a positive integer $\sigma$ such that $0<|\sigma -p- 1|\leq 2\sqrt{p}$, i.e. it flips only the phase of target elliptic curves of cardinality $\sigma$.
\end{theorem}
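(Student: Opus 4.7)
The plan is to decompose Algorithm~\ref{algorithm_our search oracle} into three independently analyzable stages and invoke results established earlier for each. First, I would verify that Steps~1, 2, and 7 realize a reversible extension $\ket{j,b}\mapsto\ket{j,b}\ket{A,B}\mapsto\ket{j,b}$. Algorithm~\ref{algorithm_get weierstrass pair} consists of a constant number of $\mathbb{F}_p$-arithmetic operations implementing the explicit formulas of Subsection~\ref{subsubsection_elliptic curves}, and the one-to-one correspondence in \cite[Cor.~X.5.4.1]{Silverman:1338326} guarantees that distinct valid $(j,b)$ map to distinct $(A,B)$. Hence the inverse call at Step~7 cleanly disentangles the ancilla on any superposition of valid inputs.

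Next, I would analyze Steps~3--6 on basis inputs. Initializing $\ket{x}\leftarrow\ket{0^n}$ and running $F_{p,\sigma,\tau}$ (Algorithm~\ref{algorithm_F}) deterministically places $r := F_{p,\sigma,\tau}(A,B,0)$ in a fresh register; the gate $Z_n$ multiplies the amplitude by $-1$ precisely when $r=0^n$; and the subsequent $F_{p,\sigma,\tau}^{-1}$ uncomputes the result register together with all intermediate division-polynomial scratch, since every subroutine composing $F_{p,\sigma,\tau}$ is built from reversible components. Extending by linearity, the net effect is the diagonal unitary
\[
\ket{j,b}\;\mapsto\;(-1)^{[F_{p,\sigma,\tau}(A,B,0)=0]}\ket{j,b}
\]
with no residual entanglement between the input and any ancilla.

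Finally, I would translate the condition $F=0$ into a condition on cardinality via Corollary~\ref{corollary_correctness of F}. If $\#E_{A,B}(\mathbb{F}_p)=\sigma$, then every rational point of $E_{A,B}$ and of its quadratic twist is annihilated by $\sigma$ and $2p+2-\sigma$ respectively, so each summand $G_{p,\sigma}(A,B,i)$ vanishes and the phase of the target curve is flipped, as required. If $\#E_{A,B}(\mathbb{F}_p)\neq\sigma$, Corollary~\ref{corollary_correctness of F} bounds the fraction of $x\in\mathbb{F}_p$ for which $F_{p,\sigma,\tau}(A,B,x)=0$ by $(3/4+\Theta(1/\sqrt p))^{\tau}$, and with the chosen $\tau=3\lceil\log p\rceil$ and $p>2^{20}$ this quantity is negligible. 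The main obstacle is bridging the gap between this probabilistic statement over the choice of $x$ and the deterministic choice $x=0$ used in the oracle: the theorem's phrase ``flips \emph{only} the phase of target curves'' must be read as up to a negligible amplitude from spuriously marked non-target curves. I would close this gap by appealing to the quasi-randomness heuristic underlying Corollary~\ref{corollary_correctness of F}---that consecutive shifts $x,x+1,\dots,x+\tau-1$ behave as independent samples on the group---so that even across the $O(p)$ non-target curves of the search space the expected number of false positives remains subconstant and the amplitude perturbation seen by the surrounding Grover routine is $o(1)$.
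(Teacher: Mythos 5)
Your proposal follows essentially the same route as the paper's proof: reduce to the condition $F_{p,\sigma,\tau}(A,B,0^n)=0$, invoke Corollary~\ref{corollary_correctness of F} with $\tau = 3\lceil\log p\rceil$ to bound the false-positive probability, and take a union/expectation bound over the $O(p)$ curve representatives to conclude the expected number of spuriously marked curves is below one. The additional circuit-level bookkeeping (reversibility of $\mathsf{GetWeierstrassPair}_p$ and the clean uncomputation of ancillas) and your explicit flagging of the $x=0$-versus-random-$x$ heuristic are elaborations the paper leaves implicit, but they do not change the substance of the argument.
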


\begin{proof}
    The phase is flipped if and only if $F_{p,\sigma,\tau}(A,B,0^n) = 0$. Given a curve $E_{A,B}$ of cardinality different from $\sigma$, by Corollary \ref{corollary_correctness of F}, the probability that $x\in\mathbb{F}_p$ satisfies $F_{p,\sigma,\tau}(A,B,x) = 0$ is at most
    $$
        \left(\frac{3}{4} + \Theta\left(\frac{1}{\sqrt{p}}\right)\right)^{3\lceil\log p\rceil} < \frac{1}{p^2}
    $$
    Since there are less than $3p$ pairs of $(j,b) \in \mathbb{F}_p\times\mathbb{Z}$, the expected number of curves $E_{A,B}$ such that $F_{p,\sigma,\tau}(A,B,0^n) = 0$ is significantly less than 1.
    \qed
\end{proof}

\subsection{Quantum Resource Estimation}

Given a serial number $\sigma$ such that $0 < | \sigma - p -1| \leq2\sqrt{p}$, Algorithm \ref{algorithm_division polynomial} requires fewer than $80\log p$ field multiplications in $\mathbb{F}_p$, assuming that $g_1$ and $g_2$ are computed naively with at most 4 multiplications in $\mathbb{F}_p$. Algorithm \ref{algorithm_F} invokes running Algorithm \ref{algorithm_division polynomial} $4\tau$ times and calculates Euler's criteria $4\tau$ times. When $\tau = 3\log p$, this results in fewer than $972\log^2p$ multiplications in $\mathbb{F}_p$. Consequently, Algorithm \ref{algorithm_our search oracle}, which executes Algorithm \ref{algorithm_F} twice, requires fewer than $1944\log^2p$ multiplications in $\mathbb{F}_p$.

\setlength{\tabcolsep}{5pt}
\renewcommand{\arraystretch}{1.2}
\begin{table}[h]
\label{table_resource_estimation}
\centering
\begin{tabular}{|c|ccc|}
\hline
attack method & oracle $O_{p,\sigma}$ & num. of iterations & num. of qubits \\ \hline
brute-force   & $O(\log^6{p})$ $\mathbb{F}_p$-Mul          & $\sqrt{2p/h(d)}$   & $O(\log^3{p})$      \\
our method    & \ $< 1944\log^2 p$ $\mathbb{F}_p$-Mul           & $\sqrt{2p/h(d)}$  & $\ <12\lceil\log{p}\rceil^2$      \\ \hline
\end{tabular}
\caption{The comparison between the brute-force attack and our method. $h(d)$ is the class number corresponding to the given serial number $\sigma$. The boundary of the number of iterations $\sqrt{2p/h(d)}$ is calculated in Section \ref{section_lower and upper bound of class numbers}.}
\end{table}
From the perspective of the number of qubits, our algorithm also consumes less number of qubits. Each $\Psi_{\sigma_i}$ accounts for $10\lceil\log p\rceil$ qubits. Thus, $10\lceil\log p\rceil^2$ qubits are needed in total. Algorithm \ref{algorithm_F} stores $r_{i,1}$ and $r_{i,2}$ every $\tau$ iterations, which requires $2\lceil \log p\rceil^2$ additional qubits. On the other hand, in Schoof's point counting algorithm \cite{schoof1995counting}, the ring element in $\mathbb{F}_p[X, Y]/(\psi_\ell(X), Y^2 - X^3 - AX - B)$ has size $\lceil\log{p}\rceil^3$.

In total, by combining the upper and lower bound on the class numbers from Section \ref{section_lower and upper bound of class numbers}, the process of forging a quantum banknote using our method requires at least $5097\frac{p^{1/4}\log^4 p}{\sqrt{\log p+\Theta(\log\log p)}}=O(p^{1/4}\log^{7/2}{p})$ and at most $8264p^{1/4}\log^{9/2}p=O(p^{1/4}\log^{9/2}{p})$ bit operations. The exact time complexity varies depending on the specific class number.
\section{Faster Verification of Quantum Money}
\label{section_faster verification}
In the quantum money scheme \cite{montgomery2025quantum}, the verification process is inherently related to the forgery attack. The verification consists of two processes : 1. checking the serial number and 2. checking the uniformity of the quantum state. Each of the two phases are related to two different types of forgery attacks : 1. searching by the serial number and 2. making a superposition by a isogeny walk. Since our attack is the prior case, our method can be applied to checking the validity of the serial number. 
\begin{algorithm}[h]
\caption{Algorithm \textsf{CheckSerialNumber}}
\label{algorithm_our verification}
\hspace*{\algorithmicindent} \textbf{Input:} A prime $p$, a serial number $\sigma \in \mathbb{N}$ and a quantum state $\ket{j}\ket{b}$ whre $(j, b) \in \mathbb{F}_p\times\mathbb{Z}$\\
\hspace*{\algorithmicindent} \textbf{Output:} Outputs $1$ if the serial number is valid, outputs 0 otherwise.
\begin{algorithmic}[1]
\State Set $n \leftarrow \lceil \log p \rceil$ and $\tau \leftarrow 3\lceil\log p\rceil$.
\State Compute $\ket{A}\ket{B} \leftarrow \textsf{GetWeierstrassPair}_p(\ket{j}\ket{b})$.
\State Set $\ket{x} \leftarrow \ket{0^n}$.
\State Compute $\ket{r} \leftarrow F_{p,\sigma,\tau}(\ket{A}\ket{B}\ket{x})$.
\State Uncompute $\ket{A}\ket{B} \leftarrow \textsf{GetWeierstrassPair}_p^{-1}(\ket{j}\ket{b})$
\State Measure $\ket{r}$. If it is $0^n$, output 1. Otherwise, output 0.
\end{algorithmic}
\end{algorithm}
We can perform a faster verification using rational points. Our method only applied to checking the validity of the serial number $\sigma$ and verifying the uniformity of the given quantum money is same as the previous method in \cite{montgomery2025quantum}. Our verification algorithm is directly derived from Algorithm \ref{algorithm_F}. The algorithm $\textsf{GetWeierstrassPair}_p$ converts the $j$-invariant form $(j,b)$ into the Weierstrass pair $(A,B)$ using the method illustrated in Algorithm \ref{algorithm_get weierstrass pair}. Our verification algorithm is $O(\log^4 p)$ times faster than using the point-counting algorithm. According to Theorem \ref{theorem_correctness of our oracle}, the probability that Algorithm \ref{algorithm_our verification} outputs false positive is negligible when $p$ is a large prime.

\section{Conclusion}
\label{section_conclusion}
In this work, we propose an attack method to forge quantum money in the isogeny-based quantum money scheme presented in \cite{montgomery2025quantum} and introduce a more efficient verification algorithm. We employ the fact that checking the exponent of an elliptic curve group with rational points is more efficient than directly computing its cardinality. Compared to the brute-force attack using the point-counting algorithm, our method achieves a speedup of $O(\log^4{p})$. More concretely, we estimate that forging quantum money using our approach requires fewer than $5097 \log^2p$ multiplications in $\mathbb{F}_p$ for each search iteration and requires approximately $12\lceil\log p\rceil^2$ qubits.

Our key insight is to utilize rational points on elliptic curves $E_{A,B}$ for the efficient computation of division polynomials. Specifically, our method exploits the property of the group structure of quadratic twists of elliptic curves. As our approach leverages the properties of quadratic twists to utilize rational points, we expect it to contribute to future research on quantum algorithms for elliptic-curve-based quantum cryptography.

\bibliographystyle{splncs04}
\bibliography{references}
\end{document}